\crefname{observation}{Observation}{Observations}
\title{Minimizing Tardy Processing Time on a Single Machine in Near-Linear Time}
\newcommand\IfRestateTF{%
  \ifx\label\thmt@gobble@label % or just compared to \@gobble
    \expandafter\@firstoftwo
  \else
    \expandafter\@secondoftwo
  \fi
}
\newcommand{\RestateRemark}{\IfRestateTF{{\normalfont\bfseries (Restated) }}{}}
\definecolor{lightblue}{HTML}{8D9DB6}
\newcommand\Nat{\mathbb N}
\newcommand\Int{\mathbb Z}
\newcommand{\W}{X}
\DeclarePairedDelimiter\parens{\lparen}{\rparen}
\DeclarePairedDelimiter\set{\lbrace}{\rbrace}
\DeclarePairedDelimiterX\innerprod[2]{\langle}{\rangle}{#1,#2}
\DeclarePairedDelimiterX\intervaloo[2]{(}{)}{#1\,.\,.\,#2}
\DeclarePairedDelimiterX\intervaloc[2]{(}{]}{#1\,.\,.\,#2}
\DeclarePairedDelimiterX\intervalco[2]{[}{)}{#1\,.\,.\,#2}
\DeclarePairedDelimiterX\intervalcc[2]{[}{]}{#1\,.\,.\,#2}
\DeclareMathOperator*\Pr{\mathbf P}
\DeclareMathOperator\OPT{OPT}
\newcommand\Order{O}
\newcommand\order{o}
\newcommand\Oo{O}
\DeclareMathOperator\bmod{mod}
\newcommand\mod[1]{\;\,({\bmod}\:#1)}
\newcommand{\Op}[1]{{\normalfont\texttt{#1}}}
\newcommand\MTPT{\smash{$1||\sum p_j U_j$}}
\newcommand\mMTPT{\smash{$P_m||\sum p_j U_j$}}
\begin{document}

\maketitle

%%%%%%%%%%%%%%%%%%%%%%%%%%%%%%%%%%%%%%%%%%%%%%%%%%%%%%%%%%%%%%%%%%%%%%%%%%%%%%%
% ABSTRACT                                                                    %
%%%%%%%%%%%%%%%%%%%%%%%%%%%%%%%%%%%%%%%%%%%%%%%%%%%%%%%%%%%%%%%%%%%%%%%%%%%%%%%

\begin{abstract}
    In this work we revisit the elementary scheduling problem $1||\sum p_j U_j$. The goal is to select, among~$n$ jobs with processing times and due dates, a subset of jobs with maximum total processing time that can be scheduled in sequence without violating their due dates. This problem is NP-hard, but a classical algorithm by Lawler and Moore from the 60s solves this problem in pseudo-polynomial time~$\Order(nP)$, where~$P$ is the total processing time of all jobs. With the aim to develop best-possible pseudo-polynomial-time algorithms, a recent wave of results has improved Lawler and Moore's algorithm for~\smash{$1||\sum p_j U_j$}: First to time~\smash{$\widetilde\Order(P^{7/4})$}~[Bringmann, Fischer, Hermelin, Shabtay, Wellnitz; ICALP'20], then to time \smash{$\widetilde\Order(P^{5/3})$}~[Klein, Polak, Rohwedder; SODA'23], and finally to time~\smash{$\widetilde\Order(P^{7/5})$}~[Schieber, Sitaraman; WADS'23]. It remained an exciting open question whether these works can be improved further.
    
    In this work we develop an algorithm in near-linear time $\widetilde\Order(P)$ for the $1||\sum p_j U_j$ problem. This running time not only significantly improves upon the previous results, but also matches conditional lower bounds based on the Strong Exponential Time Hypothesis or the Set Cover Hypothesis and is therefore likely \emph{optimal} (up to subpolynomial factors). Our new algorithm also extends to the case of $m$ machines in time $\widetilde\Order(P^m)$. In contrast to the previous improvements, we take a different, more direct approach inspired by the recent reductions from Modular Subset Sum to dynamic string problems. We thereby arrive at a satisfyingly \emph{simple} algorithm.
\end{abstract}

%%%%%%%%%%%%%%%%%%%%%%%%%%%%%%%%%%%%%%%%%%%%%%%%%%%%%%%%%%%%%%%%%%%%%%%%%%%%%%%
% INTRODUCTION                                                                %
%%%%%%%%%%%%%%%%%%%%%%%%%%%%%%%%%%%%%%%%%%%%%%%%%%%%%%%%%%%%%%%%%%%%%%%%%%%%%%%

\section{Introduction}
\label{sec:introduction}
Consider the following natural optimization problem: A worker is offered $n$ \emph{jobs}, where each job $j$ requires a \emph{processing time} of $p_j$ days and must be completed before some \emph{due date~$d_j$}. Which jobs should the worker take on in order to maximize their pay, assuming that the worker is paid a fixed amount per day of work? In standard scheduling notation~\cite{GrahamLLK79}, this task is somewhat cryptically called the ``$1||\sum p_j U_j$'' problem (see \cref{sec:preliminaries} for a formal definition). The $1||\sum p_j U_j$ problem constitutes an important scheduling task that is arguably among the \emph{simplest} nontrivial scheduling objectives, and has received considerable attention in the literature, especially in recent years.

The \MTPT{} problem naturally generalizes the famous Subset Sum problem,\footnote{Indeed, Subset Sum is the special case of \MTPT{} where all jobs share the same deadline $d$. In other words, Subset Sum is the $1|d_j = d|\sum p_j U_j$ problem.} and is therefore NP-hard. However, it does admit pseudo-polynomial-time algorithms---in 1969, Lawler and Moore~\cite{LawlerM69} pioneered the first such algorithm in time $\Order(n P)$, where~\makebox{$P = \sum_j p_j$} is the total processing time of all $n$ jobs. This result is the baseline in a line of research that, more than 50 years after the initial effort, is finally brought to a close in this paper.

\subparagraph{State of the Art.}
Lawler and Moore originally designed their algorithm for a weighted generalization of the \MTPT{} problem, and for this generalization the running time~\makebox{$\Order(nP)$} was proven to be conditionally tight.\footnote{In the so-called $1||\sum w_j U_j$ problem each job $j$ is rewarded by a specified pay~$w_j$ (instead of $p_j$). For this generalization the running time $\Order(nP)$ was proven to be conditionally optimal~\cite{CyganMWW19,KunnemannPS17}, in the sense that an algorithm in time~\smash{$\Order((n + P)^{2-\epsilon})$}, for any $\epsilon > 0$, contradicts the well-established $(\min,+)$-Convolution hypothesis. See also the discussion in~\cite{BringmannFHSW22}.} Even for the \MTPT{} problem the Lawler-Moore algorithm remained unchallenged for a long time. Only a few years ago, Bringmann, Fischer, Hermelin, Shabtay and Wellnitz~\cite{BringmannFHSW22} managed to solve \MTPT{} in time\footnote{We write~\smash{$\widetilde\Order(T) = T (\log T)^{\Order(1)}$} to suppress polylogarithmic factors.}~$\widetilde\Order(P^{7/4})$, showcasing that improvements over Lawler-Moore are indeed possible in certain parameter regimes (specifically, when~\makebox{$P \ll n^{4/3}$}). Their strategy is to design a reduction to an intermediate problem called \emph{Skewed Convolution}\footnote{Given length-$N$ integer vectors $A, B$, the Skewed Convolution problem is to compute the length-$(2N-1)$ vector $C$ defined by $C[k] = \min_{i+j=k} \max\set{A[i], B[j] - i}$.}, and to develop an~\smash{$\widetilde\Order(N^{7/4})$}-time algorithm for this intermediate problem.

Their work was later improved in two orthogonal ways. On the one hand, Klein, Polak and Rohwedder~\cite{KleinPR23} improved the running time of Skewed Convolution to~\smash{$\widetilde\Order(N^{5/3})$}. On the other hand, Schieber and Sitaraman~\cite{SchieberS23} improved the algorithmic reduction and established that, if Skewed Convolution is in time~\smash{$\widetilde\Order(N^\alpha)$}, then \MTPT{} is in time~\smash{$\widetilde\Order(P^{2-1/\alpha})$}. The state-of-the-art algorithm for \MTPT{} is obtained by combining these two works, resulting in time~\smash{$\widetilde\Order(P^{7/5})$}.

In contrast, fine-grained lower bounds for the Subset Sum problem rule out \MTPT{} algorithms in time~\smash{$\Order(P^{1-\epsilon} \cdot n^{\Order(1)})$}, for any $\epsilon > 0$, conditioned on either the influential Strong Exponential Time Hypothesis~\cite{AbboudBHS19} or the Set Cover Hypothesis~\cite{CyganDLMNOPSW16}. This leaves a substantial gap between the best known upper bound $\widetilde\Order(P^{7/5})$ and the conceivable optimum $\widetilde\Order(P)$. Closing this gap is the starting point of our paper:
\begin{center}
    \medskip
    \emph{Can the \MTPT{} problem be solved in near-linear time $\widetilde\Order(P)$?}
    \medskip
\end{center}
In light of the recent algorithmic developments~\cite{BringmannFHSW22,KleinPR23,SchieberS23}, a reasonable strategy appears to aim for even faster algorithms for the Skewed Convolution problem---unfortunately, this approach soon faces a barrier. Namely, improving the running time of Skewed Convolution beyond $\Order(N^{3/2})$ would entail a similarly fast algorithm for $(\max, \min)$-Convolution, which, while not ruled out under one of the big hypotheses, would be a surprising break-through in fine-grained complexity theory. This leaves us in an uncertain situation. Even if Skewed Convolution could be improved to time $\widetilde\Order(N^{3/2})$, this would mean that the \MTPT{} problem is in time $\widetilde\Order(P^{4/3})$~\cite{SchieberS23}. Are further improvements impossible?

\subparagraph{Our Results.}
In this paper we bypass this barrier and develop a new algorithm for \MTPT{} that avoids the reduction to Skewed Convolution altogether. We thereby successfully resolve our driving question:

\begin{restatable}{theorem}{MTPTalgorithm}
\label{theo:mtpt}\RestateRemark
The \MTPT{} problem can be solved in randomized time $\Order(P \log P)$ and in deterministic time~\smash{$\Order(P \log^{1+\order(1)} P)$}.
\end{restatable}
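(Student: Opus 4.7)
Our plan is to implement the classical Lawler--Moore dynamic program through a dynamic string data structure and use longest common extension (LCE) queries to locate the DP updates quickly. After sorting the jobs by deadline $d_1 \le \dots \le d_n$, let $S_j$ denote the set of processing-time sums achievable by feasible subsets of $\{1,\dots,j\}$. The EDD rule gives the recurrence $S_j = S_{j-1} \cup \{s + p_j : s \in S_{j-1},\ s+p_j \le d_j\}$, and crucially $S_j \subseteq [0, d_j]$, because the EDD-last job of any feasible subset has deadline at most $d_j$. We store the indicator $f_j$ of $S_j$ as a bit-string of length $P+1$ in a dynamic string structure that supports point updates and LCE queries on arbitrary suffix pairs in time $O(\log P)$ randomized, or $O(\log^{1+o(1)} P)$ deterministic (using the recent deterministic dynamic-strings result of Kempa and Kociumaka).

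\textbf{Per-iteration update.} In iteration $j$ the new $1$-bits are exactly the positions $t \in [p_j, d_j]$ with $f_{j-1}[t-p_j] = 1$ and $f_{j-1}[t] = 0$; equivalently, writing $s = t - p_j$, these are the ``type-I'' mismatches over $s \in [0, d_j - p_j]$ between $f_{j-1}[s]$ and $f_{j-1}[s + p_j]$ for which $f_{j-1}[s] = 1$ and $f_{j-1}[s + p_j] = 0$. We find every mismatch in this range by a left-to-right sweep of LCE queries on $f_{j-1}$ (each call to $\mathrm{LCE}(i, i+p_j)$ jumps past a maximal block of agreeing positions), collect the type-I positions into a buffer, and apply the point updates in a single batch after the sweep. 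Batching guarantees that every LCE query observes the unchanged $f_{j-1}$, which is needed for correctness; after the batch the string equals $f_j$, so the invariant propagates.

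\textbf{Amortization (the main obstacle).} The crux is to bound $\sum_j M_j = O(P)$, where $M_j$ is the mismatch count examined in iteration $j$. Let $N_j = |S_j \setminus S_{j-1}|$, and write $X = S_{j-1} \cap [p_j, d_j]$ and $Y = S_{j-1} \cap [0, d_j-p_j]$. With $b = |\{s \in Y : s+p_j \in S_{j-1}\}|$ one checks $N_j = |Y| - b$ and that the ``type-II'' mismatches (where $f_{j-1}[s]=0$ but $f_{j-1}[s+p_j]=1$) number $|X| - b$, so $M_j = |X|+|Y|-2b$ and therefore $M_j - 2N_j = |X|-|Y|$. Using $S_{j-1} \subseteq [0, d_{j-1}] \subseteq [0, d_j]$ one computes $|X| - |Y| = |S_{j-1} \cap (d_j-p_j,\, d_j]| - |S_{j-1} \cap [0, p_j-1]| \le p_j$, since the half-open interval $(d_j-p_j, d_j]$ contains only $p_j$ integers. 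Summing over all $j$ yields $\sum_j M_j \le 2\sum_j N_j + \sum_j p_j \le 2|S_n| + P = O(P)$. Each mismatch triggers one LCE query and at most one point update, each at cost $O(\log P)$ randomized (or $O(\log^{1+o(1)} P)$ deterministic), giving the running times stated in the theorem. The main obstacle is precisely this amortization: a single iteration can examine $\Omega(P)$ mismatches, and the structural fact $S_{j-1} \subseteq [0, d_{j-1}]$ together with the identity $M_j - 2N_j = |X|-|Y| \le p_j$ is exactly what turns per-iteration waste into a telescoping $\sum_j p_j = P$.
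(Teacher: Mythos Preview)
Your proposal is correct and follows essentially the same approach as the paper: both implement the Lawler--Moore recursion on a dynamic bit-string, use LCE queries to locate the changed positions, and amortize the total work to $O(P)$ by exploiting the deadline ordering $d_{j-1} \le d_j$ together with the containment $S_{j-1} \subseteq [0,d_{j-1}]$. The only cosmetic differences are that the paper packages the string manipulations into a separate ``Sum-Cap'' data structure with an explicit \textsf{cap} step (whereas you fold the cap into the LCE scan range), and that the paper phrases the amortization as a split into ``permanent'' insertions (telescoping to $P+1$) versus ``temporary'' ones (at most $p_j$ each), which is exactly your identity $M_j - 2N_j = |X|-|Y| \le p_j$ combined with $\sum_j N_j \le |S_n|$ after unwinding.
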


We stress that by the aforementioned lower bounds~\cite{AbboudBHS19,CyganDLMNOPSW16} our new algorithm is \emph{optimal}, up to lower-order factors, conditioned on the Strong Exponential Time Hypothesis or the Set Cover Hypothesis.

As an additional feature, and similar to all previous algorithms, our algorithm not only computes the optimal value of the given instance, but in fact reports for each value $0 \leq x \leq P$ whether there is a feasible schedule with processing time (i.e., pay) $x$. Moreover, we can compute an optimal schedule (represented as an ordered subset of jobs) in the same running time.

Another benefit of our work is that we managed to distill an astonishingly \emph{simple} algorithm. In fact, our algorithm is basically identical to the original Lawler-Moore algorithm, except that we replace certain naive computations by an appropriate efficient data structure on \emph{strings}, and employ a careful new analysis. This approach is inspired by the recent progress on the Modular Subset Sum problem~\cite{AxiotisBJTW19,AxiotisBBJNTW21,CardinalI21,Potepa21} (see \cref{sec:algorithm-for-one-machine} for more details). We~find it surprising that these conceptually simple ideas lead to near-optimal running times for~\makebox{$1||\sum p_j U_j$}. 

In particular, in contrast to previous improvements for \MTPT{}~\cite{BringmannFHSW22,KleinPR23,SchieberS23}, our algorithm is purely combinatorial and does not require the use of the Fast Fourier Transform. Given this simple nature of our algorithm, we are confident that actual implementations of the algorithm would perform well.

\subparagraph{Multiple Machines.}
The ``\mMTPT{}'' problem is the straightforward generalization of the \MTPT{} problem to $m$ workers (or machines) that can partition the jobs arbitrarily among themselves. The goal, as before, is to maximize the total workload across all workers while respecting all due dates. We assume for simplicity that $m$ is a constant.\footnote{When viewing $m$ as an input, it is easy to trace that our algorithms depend only polynomially on $m$.}

The Lawler-Moore algorithm generalizes in a straightforward manner to time $\Order(n P^m)$. For the algorithms based on Skewed Convolution, it seems significantly harder to derive multiple-machine generalizations.
Luckily, with some appropriate changes our new algorithm also generalizes to multiple machines:

\begin{restatable}{theorem}{ManyMachinesAlgorithm} \label{thm:many-machines}\RestateRemark
The \mMTPT{} problem can be solved in randomized time $\Order(P^m \log P)$ and in deterministic time~\smash{$\Order(P^m \log^{1+\order(1)} P)$}.
\end{restatable}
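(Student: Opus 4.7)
The plan is to lift the single-machine algorithm of \cref{theo:mtpt} to $m$ machines by generalizing the underlying Lawler--Moore dynamic program from one to $m$ dimensions, while preserving the property that the total work is near-linear in the size of the state space $\set{0,\ldots,P}^m$. Sort the jobs by nondecreasing due date and maintain the set $S \subseteq \set{0,1,\ldots,P}^m$ of load vectors $(x_1,\ldots,x_m)$ realizable by some subset of the jobs processed so far. When processing job $j$, perform the update
\[
S \leftarrow S \,\cup\, \bigcup_{k=1}^{m} \set{(x_1,\ldots,x_k+p_j,\ldots,x_m) : (x_1,\ldots,x_m) \in S \text{ and } x_k+p_j \leq d_j},
\]
and at the end return $\max_{(x_1,\ldots,x_m) \in S} (x_1+\cdots+x_m)$ together with the associated schedule reconstructed via back-pointers. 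Since $S$ grows monotonically and is contained in $\set{0,\ldots,P}^m$, at most $(P+1)^m$ insertions occur across the entire execution.

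Next I would encode $S$ simultaneously via $m$ copies of the dynamic-string data structure from the single-machine algorithm. The $k$-th copy $D_k$ stores the characteristic bitstring of $S$ under the linearization of $\set{0,\ldots,P}^m$ that places coordinate $k$ as the innermost (fastest-varying) index, with separator symbols inserted between consecutive pencils to prevent shifts from crossing pencil boundaries. Under this encoding, the $k$-th summand of the update for job $j$ is, in $D_k$, literally the operation handled by the single-machine routine---``OR the bitstring with its $p_j$-shift, truncated at $d_j$''---applied uniformly across all axis-$k$ pencils in a single sweep. Whenever a new cell $(x_1,\ldots,x_m)$ appears, its insertion is propagated to all $m$ copies at cost $\Order(m)=\Order(1)$ since $m$ is constant.

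The main obstacle I anticipate is ensuring that the \emph{per-pencil} overhead is absorbed by the cost of actual insertions rather than paid up-front for every one of the $(P+1)^{m-1}$ pencils per direction per job: a naive $\Order(1)$ visit to each pencil would already total $\Order(n P^{m-1})$, which could (in principle) exceed the budget. This is exactly where the longest-common-extension primitive of the dynamic-string data structure---the same primitive that drives the one-machine result---becomes essential: it lets one jump directly from one newly created cell to the next, skipping over runs of positions that agree between the current bitstring and its shift, so that pencils contributing no new cell incur only the cost of a single LCE query that itself is amortized against a subsequent insertion elsewhere. Charging the polylogarithmic overhead of each insertion and each LCE call against the at most $\Order(m \cdot P^m) = \Order(P^m)$ global events, and inheriting the randomized versus deterministic trade-off from the string data structure, yields total time $\Order(P^m \log P)$ randomized and $\Order(P^m \log^{1+\order(1)} P)$ deterministic, as claimed. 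Schedule recovery via back-pointers adds only $\Order(P^m)$ further work.
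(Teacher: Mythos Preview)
Your high-level plan is right, but the cost accounting has a real gap. You argue that because $S$ grows monotonically inside $\{0,\dots,P\}^m$, at most $(P+1)^m$ insertions happen, and you want to charge every LCE call against one of these insertions. But the LCE primitive does not jump from one \emph{newly created} cell to the next; it jumps from one position of the \emph{symmetric difference} between $x_S$ and its shift to the next. That symmetric difference has size $2\,\bigl|(S+p_j e_k)\setminus S\bigr|$, and $(S+p_j e_k)\setminus S$ contains, besides the genuine new cells with $k$-th coordinate $\le d_j$, all the ``overflow'' cells with $k$-th coordinate in $(d_j,\,d_{j-1}+p_j]$ (and, in your separator encoding, further artifacts at pencil boundaries). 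Monotonicity of $S$ says nothing about these. The missing piece is precisely the paper's \cref{lem:generalized-bounded-insertions}: one must separately bound the overflow in step $j$ by $m\,(P+1)^{m-1}\,p_j$, using that $S_j\subseteq[0,d_{j-1}]^m$ and $d_{j-1}\le d_j$, so that summing over $j$ gives $m\,(P+1)^{m-1}\sum_j p_j \le m\,(P+1)^m$. Without this argument your amortization does not go through.

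Two smaller remarks. First, your ``main obstacle'' is not one: since every $p_j\ge 1$ we have $n\le P$, so a naive visit to all $(P+1)^{m-1}$ pencils per job already costs only $n\,(P+1)^{m-1}\le (P+1)^m$, well within budget; the paper simply pays this for its $\Op{cap}$ operation and does not try to skip pencils. Second, maintaining $m$ differently-linearized copies is more machinery than needed. The paper keeps a \emph{single} indicator string under the map $\phi(s)=\sum_i s_i\,(P+1)^i$, so that adding $p_j e_k$ is a global shift by $p_j\,(P+1)^k$ in that one string; the $\Op{sum}$ operation then handles all $m$ directions uniformly, and $\Op{cap}$ is the only place that touches pencils one by one.
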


In particular, \cref{thm:many-machines} outperforms the Lawler-Moore algorithm by a near-linear factor~\smash{$\widetilde\Omega(n)$}. In contrast to the single-machine setting, however, we emphasize that this algorithm is not necessarily optimal. A conditional lower bound for this problem would, most likely, be derived from an analogous lower bound for the multiple-target Subset Sum problem~\cite{AntonopoulosPPV23}. This appears to be a challenging question which is not resolved yet.

\subparagraph{Alternative Parameters.}
So far we have only mentioned the parameters~$n$ and~\smash{$P = \sum_j p_j$}, which have been the main focus in previous work. But there are many other parameters worth considering. Natural candidates include the number of distinct deadlines ($D_\#$), the sum of all distinct deadlines ($D$), the largest processing time~($p_{\max} = \max_j p_j$) and the largest deadline~(\smash{$d_{\max} = \max_j d_j$}).

There has been research on developing nontrivial \MTPT{} algorithms (for a single machine) with respect to these parameters, such as an $\widetilde\Order(\min\set{P \cdot D_\#, P + D})$-time algorithm due to~\cite{BringmannFHSW22}, and an~\smash{$\widetilde\Order(n + p_{\max}^3)$}-time algorithm due to~\cite{KleinPR23}. We remark that the former is subsumed by our new results. The latter algorithm is incomparable to our result (specifically, our algorithm in time~\smash{$\widetilde\Order(P) = \widetilde\Order(n p_{\max})$} performs better if and only if $p_{\max} \gg n^{1/2}$). Both of these results~\cite{BringmannFHSW22,KleinPR23} generalize to $m$ machines as well, leading to similar comparisons with our~work.

It remains an interesting open question whether our $\widetilde\Order(P)$-time algorithm can be further improved with respect to the parameters $d_{\max}$ and $p_{\max}$. The Lawler-Moore algorithm achieves a running time of $O(n d_{\max})$, but in principle it seems reasonable that time $\widetilde\Order(n + d_{\max})$ can be achieved, as the analogous question for Subset Sum is resolved~\cite{Bringmann17,JinW19}. We leave this as an open question. An even more exciting question is whether we could possibly achieve time $\widetilde\Order(n + p_{\max})$. However, such an algorithm would entail a break-through for Subset Sum with small items, which currently seems out of reach.

\subparagraph{Further Related Work.}
This work is part of a bigger effort of the fine-grained complexity community to design best-possible pseudo-polynomial time algorithms for a host of optimization problems. This line of research encompasses, besides the aforementioned scheduling problems~\cite{BringmannFHSW22,HermelinMS22,KleinPR23,SchieberS23}, various variants of Subset Sum~\cite{KoiliarisX19,Bringmann17,AbboudBHS19,AxiotisBJTW19,AxiotisT19,BringmannW21,PolakRW21,AxiotisBBJNTW21,CardinalI21,Potepa21,DengMZ23}, Knapsack~\cite{Tamir09,CyganMWW19,KunnemannPS17,BateniHSS18,EisenbrandW20,PolakRW21,BringmannC23,ChenLMZ24,BringmannC22,Bringmann23,Jin23}, Integer Programming~\cite{EisenbrandW20,JansenR23} and many others~\cite{ChanH22,DengMZ23}.

%%%%%%%%%%%%%%%%%%%%%%%%%%%%%%%%%%%%%%%%%%%%%%%%%%%%%%%%%%%%%%%%%%%%%%%%%%%%%%%
% PRELIMINARIES                                                               %
%%%%%%%%%%%%%%%%%%%%%%%%%%%%%%%%%%%%%%%%%%%%%%%%%%%%%%%%%%%%%%%%%%%%%%%%%%%%%%%

\section{Preliminaries}
\label{sec:preliminaries}
Throughout, we write $[n] = \set{0, \dots, n-1}$ and use the interval notation $\intervalcc ij = \set{i, \dots, j}$, and similarly $\intervalco ij$, $\intervaloc ij$, $\intervaloo ij$. For two sets of integers $S, T$ and an integer $t$ we employ the sumset notation $S + t = \set{s + t : s \in S}$ and $S + T = \set{s + t : s \in S, t \in T}$.

\subparagraph{Scheduling Problems.}
We begin with a formal definition of the \MTPT{} problem. The input consists of $n$ \emph{jobs}, where each job $j \in [n]$ has a \emph{processing time $p_j \in \Nat_{> 0}$} and \emph{due dates~\makebox{$d_j \in \Nat_{> 0}$}}. A schedule is a permutation~\makebox{$\sigma : [n] \to [n]$}. The \emph{completion time $C_j$} of a job $j$ in the schedule~$\sigma$ is~\makebox{$C_j = \sum_{i : \sigma(i) \leq \sigma(j)} p_i$} (i.e., the total processing time of all jobs preceding $j$, including $j$ itself). We say that $j$ is \emph{early} if $C_j \leq d_j$ and \emph{tardy} otherwise, and let $U_j \in \set{0, 1}$ be the indicator variable indicating whether $j$ is tardy. In this notation, our objective is to find a schedule minimizing $\sum_j p_j U_j$ (i.e., the total processing time of all tardy jobs). This explains the description $1||\sum p_j U_j$ in three-field notation.\footnote{The $1$ in the first field denotes a single machine, the empty second field symbolizes no additional constraints, and the third field gives the objective to minimize~\smash{$\sum_j p_j U_j$}.} For convenience we have defined the problem in such a way that $p_j > 0$, and as a consequence we can always bound~\makebox{$n \leq P$}.\footnote{If jobs with processing time $p_j = 0$ were permitted, all of our algorithms would additionally require $\Order(n)$ time preprocessing.}

For the $m$-machine problem \mMTPT{} a schedule is analogously defined as a function~\makebox{$\sigma : [n] \to [n] \times [m]$}, where the first coordinate determines the order of the jobs as before, and the second coordinate determines the machine which is supposed to execute the job. The completion time $C_j$ is the total processing time of all jobs preceding $j$ \emph{on $j$'s machine} (including $j$ itself), and the objective of the problem remains unchanged. For simplicity, we assume throughout the paper that $m$ is a constant (it can easily be verified that we only omit~\smash{$m^{\Order(1)}$}-factors this way).

\subparagraph{Earliest-Due-Date-First Schedules.}
A key observation about \MTPT{} dating back to Lawler and Moore~\cite{LawlerM69} is that, without loss of generality, the early jobs are scheduled in increasing order of their due dates. This observation is leveraged as follows: We reorder the jobs such that $d_0 \leq \dots \leq d_{n-1}$ (we will stick to this ordering for the rest of the paper). Thus, the \MTPT{} problem is effectively to compute a subset of jobs $J \subseteq [n]$ that maximizes~\smash{$\sum_{j \in J} p_j$} and is \emph{feasible} in the sense that all jobs in $J$ are early (i.e., $C_j = \sum_{i \in J : i \leq j} p_i \leq d_j$ for all~\smash{$j \in J$}).

\subparagraph{Machine Model.}
We work in the standard Word RAM model with word size $\Theta(\log n + \log P)$ (such that each job can be stored in a constant number of cells). Moreover, all randomized algorithms mentioned throughout are Las Vegas (i.e., zero-error) algorithms running in their claimed time bounds with high probability $1 - 1/n^{\Omega(1)}$.

%%%%%%%%%%%%%%%%%%%%%%%%%%%%%%%%%%%%%%%%%%%%%%%%%%%%%%%%%%%%%%%%%%%%%%%%%%%%%%%
% SINGLE MACHINE                                                              %
%%%%%%%%%%%%%%%%%%%%%%%%%%%%%%%%%%%%%%%%%%%%%%%%%%%%%%%%%%%%%%%%%%%%%%%%%%%%%%%

\section{Near-Optimal Algorithm for a Single Machine}
\label{sec:algorithm-for-one-machine}

In this section, we give the details of our near-optimal algorithm for \MTPT{}. We start with a brief summary of the Lawler-Moore algorithm. 

\subparagraph{Lawler and Moore's Baseline.}
The Lawler-Moore algorithm~\cite{LawlerM69} is the natural dynamic programming solution for the $1||\sum p_j U_j$ problem. We present it here by recursively defining the following sets~\makebox{$S_0, \dots, S_n \subseteq \intervalcc 0P$}:
\begin{alignat*}{2}
    S_0 &= \set{0}, \\
    S'_{j+1} &=  S_{j} + \set{0, p_j} \qquad && (j \in [n]), \\
    S_{j+1} &= S'_{j+1} \cap [0,d_{j}]  \qquad && (j \in [n]).
\end{alignat*}
(The construction of $S_{j+1}$ is divided into two steps as this will be convenient later on.) Each set $S_{j+1}$ can naively be computed from $S_j$ in time $\Order(P)$, and thus all sets $S_0, \dots, S_n$ can be naively computed in time $\Order(nP)$. We can ultimately read off the optimal value as $\max S_n$, based on the following observation:

\begin{observation}[Lawler and Moore \cite{LawlerM69}]
    \label{obs:s-sets}
    There is a feasible schedule of total processing time~$t$ if and only if $t \in S_n$.
\end{observation}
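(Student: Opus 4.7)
My plan is to prove, by induction on $j$, the strengthening
\[
    S_j = \Bigl\{\sum_{i \in J} p_i : J \subseteq \intervalco 0j \text{ is feasible}\Bigr\},
\]
where ``feasible'' means $\sum_{i \in J,\, i \leq k} p_i \leq d_k$ for all $k \in J$ (i.e., EDD-feasibility restricted to the first $j$ jobs, exactly as recalled in the preliminaries). Setting $j = n$ and invoking the EDD reformulation of \MTPT{}---which states that an optimal schedule may always be taken to process the early jobs in order of non-decreasing due date---immediately yields the observation.

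The base case $j = 0$ is immediate since $S_0 = \set{0}$ and only $J = \emptyset$ is available. For the inductive step, given a feasible selection $J \subseteq \intervalco 0{j+1}$ of total $t$, I case-split on whether $j \in J$. If $j \notin J$, then $J$ is also a feasible subset of the first $j$ jobs, and therefore $t \in S_j$ by the inductive hypothesis. If $j \in J$, write $J = J' \cup \set{j}$; then $J'$ inherits feasibility on the first $j$ jobs from $J$, while feasibility of $J$ at step $k = j$ forces $\sum_{i \in J'} p_i + p_j \leq d_j$. Hence $t - p_j \in S_j$ and $t \leq d_j$, i.e., $t \in (S_j + p_j) \cap \intervalcc 0{d_j}$. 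The reverse direction is symmetric: any total produced by the right-hand side of the recursion is realized by the corresponding selection.

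To show that the two cases combine into the recursive formula $S_{j+1} = (S_j + \set{0, p_j}) \cap \intervalcc 0{d_j}$, I need the auxiliary monotonicity fact $S_j \subseteq \intervalcc 0{d_j}$, so that the intersection with $\intervalcc 0{d_j}$ does not discard the ``$j \notin J$'' contribution. For $j \geq 1$ this follows directly because $S_j$ was defined as an intersection with $\intervalcc 0{d_{j-1}}$ and $d_{j-1} \leq d_j$ by the EDD sorting; the $j = 0$ case is trivial since $0 \in \intervalcc 0{d_0}$.

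I do not expect any real obstacle. The only conceptual subtlety is precisely the monotonicity observation above: a naive reading of the recursion makes it look as though intersecting with $\intervalcc 0{d_j}$ could spuriously prune totals that never exceed $d_j$ anyway, and one has to explicitly argue (using EDD) that the feasible totals over the first $j$ jobs are already bounded by $d_{j-1} \leq d_j$.
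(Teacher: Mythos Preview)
Your proposal is correct and follows essentially the same inductive argument the paper sketches informally after the observation: both argue that $S_{j+1}$ is exactly the set of processing times of feasible EDD schedules on jobs $0,\dots,j$, by case-splitting on whether job $j$ is selected. You are simply more explicit about the monotonicity point $S_j \subseteq \intervalcc{0}{d_{j-1}} \subseteq \intervalcc{0}{d_j}$, which the paper leaves implicit.
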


More generally, $S_{j+1}$ is the set of processing times of feasible schedules involving the jobs~$0, \dots, j$. To see this, consider any feasible schedule of the jobs $0, \dots, j-1$ (whose processing time is in $S_j$). We can either leave out the next job $j$ or append to the schedule. Thereby, the set of processing times becomes $S'_{j+1} = S_j + \set{0, p_j} = \set{s, s + p_j : s \in S'_{j}}$. However, this appended schedule is not necessarily feasible as it might not comply with the due date~$d_j$. Hence, all processing times greater than $d_j$ are deleted again in the construction of $S_{j+1}$.

\subparagraph{Our Approach.}
Perhaps surprisingly, our algorithm essentially follows the same approach, i.e., our goal remains to compute the sets $S_0, \dots, S_n$. However, we will demonstrate that the naive $\Order(P)$-time computation of each step can be significantly sped up. Our algorithm relies on two ingredients---an algorithmic and a structural one.

\subparagraph{Ingredient 1: An Efficient Data Structure.}
As the sets $S_j'$ and $S_j$ are constructed in a highly structured way, can we compute them faster than time $\Order(P)$? Specifically, is there a way to~(i)~compute each set $S'_{j+1}$ in time proportional to the number of \emph{insertions} $|S'_{j+1} \setminus S_j|$, and to~(ii)~compute $S_{j+1}$ in time proportional to the number of \emph{deletions} $|S'_{j+1} \setminus S_{j+1}|$?

Question (i) is closely related to the Subset Sum problem, and has been successfully resolved in~\cite{AxiotisBJTW19} leading to near-optimal algorithms for \emph{Modular} Subset Sum. Their solution based on linear sketching is quite involved~\cite{AxiotisBJTW19}, but two independent papers~\cite{AxiotisBBJNTW21,CardinalI21} provided a significantly simpler proof by replacing linear sketching with a reduction to a dynamic string problem; see also~\cite{Potepa21}\footnote{In~\cite{Potepa21}, Pot\k{e}pa proposes an improved deterministic data structure with applications to the Modular Subset Sum problem. A priori, it looks like their improvement might similarly apply to our setting. Unfortunately, the data structure is only efficient if we have the freedom to arbitrarily reorder the items, which is prohibitive for us as we have to stick to the order $d_0 \leq \dots \leq d_{n-1}$.}. Regarding (ii), it turns out that we can adapt this reduction to the dynamic string problem to efficiently accommodate our deletions. The following lemma summarizes the resulting data structure; we defer its proof to \cref{subsec:dynamic-string-data-structure}.

\begin{lemma}[Sum-Cap Data Structure]
    \label{lem:set_ds}
    There is a randomized data structure that maintains a set $S \subseteq [u]$ and supports the following operations:
    \smallskip
    \begin{itemize} 
        \item \parbox{2.2cm}{$\Op{init}(S)$:} Initializes the data structure to the given set $S \subseteq [u]$.\\\parbox{2.2cm}{~} Runs in time $\Oo(|S| \cdot \log u + \log^2 u)$.
        
        \item \parbox{2.2cm}{$\Op{query}(s)$:} Given $s \in [u]$, tests whether $s \in S$.\\\parbox{2.2cm}{~} Runs in time $\Oo(\log u)$.
        
        \item \parbox{2.2cm}{$\Op{sum}(p)$:} Given $p \in [u]$, updates $S \gets S + \set{0, p}$.\\\parbox{2.2cm}{~} Runs in time $\Oo(|(S+p)\setminus S| \cdot \log u)$ (where $S$ is as before the operation).

        \item \parbox{2.2cm}{$\Op{cap}(d)$:} Given $d \in [u]$, updates $S \gets S \cap [d]$.\\\parbox{2.2cm}{~} Runs in time $\Oo(\log u)$.
    \end{itemize}
    \smallskip
    If at any point during the execution an element $s \notin [u]$ is attempted to be inserted, the data structure becomes invalid. Moreover, the data structure can be made deterministic at the cost of worsening all operations by a factor $\log^{\order(1)} u$.
\end{lemma}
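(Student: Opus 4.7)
The plan is to represent the set $S$ by its characteristic string $\chi_S \in \{0,1\}^u$ and maintain it with a standard dynamic-strings data structure supporting split, concatenation, single-character access, and longest-common-prefix (LCP) queries between substrings of (possibly distinct) stored strings, each in $\Oo(\log u)$ time (randomized, via Karp--Rabin fingerprints; deterministic with a $\log^{\order(1)} u$ overhead). As a bootstrapping step, a reference all-zero string $Z = 0^u$ is built once by iterated concatenation-doubling in $\Oo(\log^2 u)$ time.

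The three easy operations then fall out directly: $\Op{init}(S)$ starts from $Z$ and flips each of the $|S|$ one-positions via a constant number of split/concat operations per element, costing $\Oo(|S|\log u + \log^2 u)$; $\Op{query}(s)$ is a single-character access, $\Oo(\log u)$; and $\Op{cap}(d)$ is a prefix extraction of length $d$ (followed by padding with a prefix of $Z$ back to length $u$), also $\Oo(\log u)$.

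The real work lies in $\Op{sum}(p)$. First snapshot the current $A_0 = \chi_S$ and build the shifted version $B_0 = (0^p \cdot A_0)$ truncated to length $u$; this takes $\Oo(\log u)$ via a prefix of $Z$, a concatenation, and a prefix extraction. Because $A_0$ and $B_0$ remain static throughout the call, the target state $\chi_S = A_0 \vee B_0$ differs from $A_0$ in exactly the $k = |(S+p)\setminus S|$ positions where $B_0$ has a $1$ and $A_0$ has a $0$. I iterate with a scan pointer $i$, initially zero: in each step, a single LCP query between $\chi_S[i..]$ and $B_0[i..]$ locates the smallest $j \geq i$ where they disagree; if $\chi_S[j] = 0$ and $B_0[j] = 1$, I flip $\chi_S[j]$ to $1$ via a split--concat at position $j$; otherwise, i.e.\ $\chi_S[j] = 1$ and $B_0[j] = 0$, I leave $\chi_S$ unchanged; in either case advance $i \gets j+1$. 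The key structural observation is that the symmetric difference of $A_0$ and $B_0$, viewed as subsets of $[u]$, has size $|S \triangle (S+p)| = 2|(S+p)\setminus S| = 2k$, using $|S| = |S+p|$; hence the loop performs at most $2k$ iterations, each costing $\Oo(\log u)$, giving the required $\Oo(k \log u)$ bound.

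The main obstacle is arranging that each scan step truly costs only $\Oo(\log u)$, which reduces to exhibiting a dynamic-strings data structure supporting cross-string LCP queries in this time bound; Karp--Rabin-based structures handle this by binary-searching over equal-hash prefixes, giving the randomized bound, and a deterministic dynamic-strings structure with $\log^{\order(1)} u$ slowdown yields the deterministic variant. A minor technical detail is detecting invalid insertions at positions $\geq u$ as stipulated by the lemma, which is easily handled by tracking the largest $1$-position of $\chi_S$ before each $\Op{sum}$ call and flagging the data structure as invalid whenever this would exceed $u - 1 - p$.
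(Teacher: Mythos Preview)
Your proof is correct and follows essentially the same approach as the paper: both represent $S$ by its length-$u$ characteristic bit-string in a dynamic-strings data structure, build $0^u$ by repeated doubling, implement $\Op{init}$, $\Op{query}$, and $\Op{cap}$ by straightforward update/access/split-concat, and implement $\Op{sum}(p)$ by constructing the shifted string, locating all mismatch positions via repeated LCE/LCP queries, and bounding their number by $|S\triangle(S+p)|=2|(S+p)\setminus S|$. The only cosmetic difference is that the paper first collects the full difference set $D$ by comparing the two static strings and then performs all updates, whereas you interleave the scan and the updates on $\chi_S$; since your modifications always lie strictly behind the scan pointer, the two variants are equivalent.
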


\subparagraph{Ingredient 2: A Structural Insight.}
What have we gained by computing the sets $S_j$ and~$S_j'$ with the data structure from \cref{lem:set_ds}? 
Due to the particularly efficient \Op{cap} operation, the computation of the sets $S_j$ is essentially for free.
Computing the sets $S_j'$ using the \Op{sum} operation, however, amounts to time
\begin{equation*}
    \widetilde\Order\parens*{\sum_{j \in [n]} |S_{j+1}' \setminus S_j|}.
\end{equation*}
A priori, it is not clear whether this is helpful. In case of only \emph{inserting} elements, this sum could be conveniently bounded by $P$ (as is the case for Modular Subset Sum). Unfortunately, we additionally have to deal with \emph{deletions}. Specifically, it is possible that some element $s$ is inserted in~$S_1'$, deleted in~$S_1$, inserted again in~$S_2'$, and so on. All in all, $s$ could be inserted up to $n$ times, and so the only immediate upper bound for the sum is~$nP$ (which recovers the Lawler-Moore running time).

Our crucial structural insight is that, while the same element can indeed be inserted and deleted multiple times, the \emph{total number of insertions} is nevertheless bounded. More precisely, we show that the overall number of insertions is at most $\Order(P)$:

\begin{lemma}[Bounded Insertions]
    \label{lem:s-sets}
    It holds that $\sum_{j\in[n]} |S'_{j+1} \setminus S_{j}| \le 2P + 1$.
\end{lemma}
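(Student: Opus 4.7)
The plan is to decompose each insertion count $|S'_{j+1}\setminus S_j|$ into the elements that survive the subsequent cap and those that are immediately discarded, and bound the two resulting sums separately. Because the jobs are sorted by due date, we have $S_j\subseteq[0,d_j]$ for every $j$ (trivially for $j=0$, and from $S_j \subseteq [0, d_{j-1}] \subseteq [0, d_j]$ otherwise), so the cap $S_{j+1} = S'_{j+1}\cap[0,d_j]$ leaves $S_j$ intact and we obtain the chain $S_j \subseteq S_{j+1} \subseteq S'_{j+1}$. Setting $I_j := |S_{j+1}| - |S_j|$ and $D_j := |S'_{j+1}| - |S_{j+1}|$, it follows that
\[
    |S'_{j+1}\setminus S_j| \;=\; |S'_{j+1}| - |S_j| \;=\; I_j + D_j.
\]

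The first sum telescopes to $\sum_{j\in[n]} I_j = |S_n| - |S_0| \le P$, since every element of $S_n$ is a subset sum of processing times and therefore lies in $[0,P]$, giving $|S_n|\le P+1$ and $|S_0|=1$. The crux of the proof is bounding $\sum_j D_j$. The elements removed at step $j$ are exactly $S'_{j+1}\cap(d_j,\infty)$, and since $S_j\subseteq[0,d_j]$, each such element must be of the form $s+p_j$ with $s\in S_j$ and $s+p_j>d_j$. Any such $s+p_j$ therefore lies in the integer interval $(d_j,\,d_j+p_j]$, which contains exactly $p_j$ integers, so $D_j\le p_j$ and $\sum_j D_j \le \sum_j p_j = P$.

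Adding the two bounds yields $\sum_{j\in[n]} |S'_{j+1}\setminus S_j| \le P + P = 2P \le 2P+1$, as claimed. The only nontrivial step is the bound $D_j \le p_j$, which depends essentially on the Earliest-Due-Date-First assumption $d_{j-1}\le d_j$: without this ordering, $S_j$ could contain values exceeding $d_j$, allowing $D_j$ to be as large as $|S_j|$, and the argument would break down. In particular, this is the only place where the structural assumption on the due dates enters the analysis.
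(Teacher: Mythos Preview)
Your proof is correct and follows essentially the same decomposition as the paper: split $S'_{j+1}\setminus S_j$ into the part that survives the cap (your $I_j$, the paper's $(S'_{j+1}\setminus S_j)\cap[0,d_j]$) and the part that is immediately deleted (your $D_j$), then bound each sum by $P$ and $P+1$ respectively. Your telescoping argument for $\sum_j I_j = |S_n|-|S_0|\le P$ is a slightly slicker variant of the paper's per-element ``each $x$ is first inserted at most once'' count, and even shaves the bound to $2P$; the treatment of $D_j\le p_j$ is identical.
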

\begin{proof}
    We split the sum into two parts:
    \[
        \sum_{j\in[n]} \big|(S'_{j+1} \setminus S_{j})\big| = \sum_{j\in[n]} \big|(S'_{j+1} \setminus S_{j}) \cap \intervalcc{0}{d_j}\big| + \sum_{j\in[n]} \big|(S'_{j+1} \setminus S_{j}) \cap \intervaloc{d_j}{P}\big|.
    \]
    Intuitively, the first sum counts the number of elements that are irreversibly inserted into the sets~\makebox{$S_{j+1}, \dots, S_n$} in the $j$-th step.
    The second sum counts the number of elements that are inserted into $S_{j+1}'$ and immediately deleted in $S_{j+1}$.

    For the first sum, consider the following observation: For any $x \in \intervalcc{0}{P}$, if $x \le d_j$ and $x \in S'_{j+1} \setminus S_{j}$, then $x \in S_{j+1}, \dots , S_n$ (since $d_j \le d_{j+1}, \dots, d_{n-1}$).
    It follows that
    \[
        \big|\set{ j \in [n] : x \in (S'_{j+1} \setminus S_{j}) \cap \intervalcc{0}{d_j}}\big| \le 1
    \]
    for all $x \in \intervalcc{0}{P}$.
    Thus,
    \begin{align*}
        \sum_{j\in[n]} \big|(S'_{j+1} \setminus S_{j}) \cap \intervalcc{0}{d_j}\big| 
        =  \sum_{x\in\intervalcc{0}{P}} \big|\set{ j \in [n] : x \in (S'_{j+1} \setminus S_{j}) \cap \intervalcc{0}{d_j}}\big| 
        \le P + 1.
    \end{align*}   

    Second, we bound $|(S'_{j+1} \setminus S_{j}) \cap \intervaloc{d_j}{P}| \le |S'_{j+1} \cap \intervaloc{d_j}{P}|$.
    Note that $S_{j} \subseteq \intervalcc{0}{d_{j-1}}$ and therefore $S_{j+1}' = S_{j} + \set{0, p_j} \subseteq \intervalcc{0}{d_{j-1} + p_j}$.
    Consequently,
    \begin{align*}
        \big|S'_j \cap \intervaloc{d_j}{P} \big|
        \le \big| \intervalcc{0}{d_{j-1} + p_j} \cap  \intervaloc{d_j}{P} \big| 
        \le d_{j-1} + p_j - d_j 
        \le p_j,
    \end{align*}
    where the final inequality follows from the assumption that $d_{j-1} \le d_j$.
    Hence, the number of overall deletions is
    \[
        \sum_{j\in[n]} \big|(S'_{j+1} \setminus S_{j}) \cap \intervaloc{d_j}{P}\big| \le \sum_{j\in[n]} p_j = P.
    \]
    Combining both parts concludes the proof.
\end{proof}

The proof for \cref{lem:set_ds} is provided in \cref{subsec:dynamic-string-data-structure}.
Using \cref{lem:s-sets} and \ref{lem:set_ds}, we are in the position to show our main result.

\MTPTalgorithm*
\begin{proof}
    In summary, our algorithm works as follows. We compute~\makebox{$S_0, \dots, S_n \subseteq \intervalcc 0P$} using the data structure from \cref{lem:set_ds} (with $u = P + 1$).
    Specifically, after initializing $S$ with \Op{init}$(S_0)$, we repeatedly construct the sets $S_j'$ and $S_j$ using the operations \Op{sum}$(p_j)$ and \Op{cap}$(d_j)$ for all~\makebox{$j \gets 0, \dots, n-1$}.
    The largest element in the final set $S = S_n$ is the maximal total processing of a feasible schedule of all jobs $0, \dots, n-1$.
    Finding and returning it is the last step of our algorithm, by repeatedly using the \Op{query}$(i)$ operation and returning the largest index~$i$ for which the query returns yes.
    The correctness of our algorithm follows from \cref{obs:s-sets}.

    The running time is composed of the following parts:
    The initialization runs in time $\Oo(\log^2 P)$, the repeated use of \Op{sum} and \Op{cap} takes time $\Oo(\sum_{j\in [n]} (|S_{j+1}' \setminus S_j| \cdot \log P + \log P))$ and the optimal value is found in time~$\Oo(P)$.
    Using~$n \le P$ and \cref{lem:s-sets}, it holds that 
    \[
        \Oo\parens*{\sum_{j\in [n]} (|S_{j+1}' \setminus S_j| \cdot \log P + \log P)} \le \Oo(P\log P).
    \]
    In total, we have a randomized running time of $\Oo(P\log P)$.
    Applying the same arguments yields the deterministic running time of $\Oo(P \log^{1 + o(1)}P)$.
\end{proof}

We stress that the algorithm described in this section only computes the optimal~\emph{value}. 
In \cref{subsec:schedule}, we explain how our algorithm can be easily extended to obtain the optimal \emph{schedule} as well.

%%%%%%%%%%%%%%%%%%%%%%%%%%%%%%%%%%%%%%%%%%%%%%%%%%%%%%%%%%%%%%%%
% A DATA STRUCTURE VIA DYNAMIC STRINGS
%%%%%%%%%%%%%%%%%%%%%%%%%%%%%%%%%%%%%%%%%%%%%%%%%%%%%%%%%%%%%%%%

\subsection{Cap-Sum Data Structure via Dynamic Strings}
\label{subsec:dynamic-string-data-structure}
In this section, we provide the missing proof of \cref{lem:set_ds} by a reduction to the \emph{dynamic strings} data structure problem. This is the fundamental problem of maintaining a collection of strings that can be concatenated, split, updated, and tested for equality---see~\cite{SundarT94,PughT89,MehlhornSU97,AlstrupBR00,GawrychowskiKKL18}. We summarize the state of the art in the following lemma; the fastest randomized (and in fact, optimal) data structure is due to Gawrychowski, Karczmarz, Kociumaka, Lacki and Sankowski~\cite{GawrychowskiKKL18}, and for the fastest deterministic one see~\cite[Section~8]{KempaK22}.

Here we use standard string notation for a string $x$, where $x[i]$ denotes the letter at index~$i$, and $x\intervalcc{i}{j}, x \intervalco ij$ denote the appropriate substrings.

\begin{lemma}[Dynamic String Data Structure \cite{GawrychowskiKKL18,KempaK22}]
    \label{lem:dynamic_string_ds}
    There is a data structure that maintains a dynamic collection $\W$ of non-empty strings and support the following operations:
    \smallskip
    \begin{itemize}
        \item \parbox{3.75cm}{$\Op{make\_string}(x)$:} Given any string $x \in \Sigma^+$, inserts $x$ into $X$.

        \item \parbox{3.75cm}{$\Op{concat}(x_1,x_2)$:} Given $x_1, x_2 \in X$, inserts the concatenation $x_1 x_2$ into $\W$.

        \item \parbox{3.75cm}{$\Op{split}(x,i)$:} Given $x \in X$ and $i \in \intervalco{0}{|x|}$, inserts $x\intervalcc{0}{i}$ and $x\intervaloo{i}{|x|}$ into $X$.

        \item \parbox{3.75cm}{$\Op{LCP}(x_1,x_2)$:} Given $x_1, x_2 \in X$, returns the length $\ell$ of their longest common \\\parbox{3.75cm}{~} prefix, i.e., returns $\max\set{0 \le \ell \leq \min\set{|x_1|, |x_2|} : x_1\intervalco{0}{\ell} = x_2\intervalco{0}{\ell}}$.

        \item \parbox{3.75cm}{$\Op{query}(x,i)$:} Given $x \in X$ and $i \in [|x|]$, returns $x[i]$.
    \end{itemize}
    \smallskip
    Let $n$ be the maximum of the total length of all strings and the number of executed operations. 
    Then all operations run in randomized time $\Oo(\log n)$ or in deterministic time $\Oo(\log^{1 + o(1)}n)$, except for \Op{make\_string} which takes time $\Oo(|x| + \log n)$ and $\Oo(|x| \cdot \log^{o(1)}n)$, respectively.
\end{lemma}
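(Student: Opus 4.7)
The plan is to represent each string $x \in \W$ by a balanced binary search tree $T_x$ whose in-order sequence of leaves spells out $x$, with every node $v$ augmented by (i) the length $|v|$ of the substring $s_v$ represented at $v$ and (ii) a Karp-Rabin fingerprint $\phi(v) = \sum_{i=0}^{|s_v|-1} s_v[i]\, r^{i} \bmod q$ for a random base $r$ and a prime $q$ of size $\mathrm{poly}(n)$. Because fingerprints compose as $\phi(uv) = \phi(u) + r^{|u|}\phi(v) \bmod q$, each internal node's fingerprint can be computed from its children in $O(1)$ time, given a precomputed table of powers of~$r$.

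With this representation, \Op{make\_string}$(x)$ builds a balanced tree over $|x|$ leaves in time $O(|x|)$ while computing fingerprints and lengths bottom-up; the extra $O(\log n)$ summand covers registering the new root in the collection. Operations \Op{concat} and \Op{split} reduce to the standard join and split primitives of a self-balancing tree (e.g.\ a treap or a weight-balanced tree), each running in $O(\log n)$ time, with fingerprints along the affected root-to-leaf path updated on the fly. \Op{query}$(x,i)$ descends $T_x$ using subtree sizes to locate the $i$-th leaf in $O(\log n)$.

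The main obstacle is \Op{LCP}. The naive strategy---binary search on the common-prefix length, using fingerprint equality as the predicate---costs $\Theta(\log^2 n)$ and thus overshoots the target by a log factor. To bring it down to $O(\log n)$, I would augment the balanced tree with a \emph{canonical} hierarchical decomposition produced by locally consistent parsing, such as the signature scheme of Mehlhorn-Sundar-Uhrig or a string-synchronizing-set construction as used in \cite{GawrychowskiKKL18,KempaK22}. The key property is that equal substrings are assigned structurally identical subtrees, so one can run a synchronized top-down descent on $T_{x_1}$ and $T_{x_2}$: at each aligned level, compare fingerprints of the left children; if they agree, the LCP extends through that whole block and the descent jumps to the right siblings, otherwise it recurses into the left children. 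The canonical alignment guarantees $O(\log n)$ total work.

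For correctness of the randomized variant it suffices that no fingerprint collision between two inequal substrings ever occurs among the $\mathrm{poly}(n)$ comparisons; a union bound over a random $r$ in a sufficiently large field gives failure probability $1/\mathrm{poly}(n)$, and a verification pass that replays any LCP answer with a resampled hash converts the scheme into Las Vegas. The deterministic variant of \cite{KempaK22} replaces random fingerprints by deterministic names maintained in an edit-friendly dictionary, which accounts for the extra $\log^{o(1)} n$ factor. The technical heart of both cited works---and the step I would expect to be hardest to re-derive in detail---is engineering the canonical parse so that it survives arbitrary splits and concatenations while preserving both the $O(\log n)$ balance of the trees and the structural alignment property on which the LCP descent depends.
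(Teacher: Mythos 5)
You should first note that the paper does not prove this lemma at all: it is imported verbatim as a black box, with the randomized bound attributed to \cite{GawrychowskiKKL18} and the deterministic one to \cite[Section~8]{KempaK22}. Your sketch is in the spirit of those works, but where it commits to specifics it has a genuine gap. The architecture you propose --- a treap/weight-balanced tree over the characters, augmented with Karp--Rabin fingerprints, onto which a ``canonical'' locally consistent decomposition is grafted --- does not fit together. The canonicity needed for the $\Oo(\log n)$ \Op{LCP} descent is a property of the parse tree itself (equal strings must induce structurally identical trees), and this is incompatible with maintaining an arbitrary self-balancing BST: rotations, joins and splits of a treap are driven by balance (or random priorities), not by string content, so after a few \Op{concat}/\Op{split} operations two equal substrings will in general sit in differently shaped subtrees and your synchronized descent has nothing to align. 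In \cite{GawrychowskiKKL18} (and in the grammar-based deterministic variant of \cite{KempaK22}) the locally consistent parsing \emph{is} the representation: $\Oo(\log n)$ levels of run-length/pair compression whose symbols are canonical names, and \Op{concat}/\Op{split} are fast because only $\Oo(1)$ symbols per level near the cut point need to be recomputed. That mechanism --- which you yourself flag as ``the step I would expect to be hardest'' and defer to the citations --- is precisely the content of the lemma, so the proposal does not constitute a proof.

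A second, more local problem is the randomized-correctness argument. Fingerprint equality is inherently Monte Carlo, and ``replaying any LCP answer with a resampled hash'' does not yield a Las Vegas guarantee: if two unequal substrings happen to collide you simply collide again with some probability, and genuinely verifying a claimed common prefix without fingerprints costs linear time. The cited structure avoids this issue differently: equal substrings receive \emph{identical} canonical signatures by construction, so equality and LCP answers are always correct, and randomization (hashing of the signature dictionary) affects only the running time. This distinction matters here, since the paper's machine-model section explicitly requires zero-error algorithms whose time bounds hold with high probability.
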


For the sake of convenience, we include two more dynamic string operations that are derived from the previous lemma in a black-box fashion.
As both are standard operations~\cite{GawrychowskiKKL18}, we only provide their implementations for completeness. 
\smallskip
\begin{itemize}
    \item $\Op{update}(x, i, \sigma)$: Given $x \in X$, an index $i \in [|x|]$ and $\sigma \in \Sigma$, inserts the string obtained from $x$ by changing the $i$-th character to $\sigma$ into the data structure. To implement this, we split the string $x$ twice to separate the letter $x[i]$ from the rest of the string.
    Specifically, we obtain the substring $x\intervalcc{0}{i}$ using \Op{split}$(x,i)$ and further divide it to get the substring~$x\intervalco{0}{i}$ by \Op{split}$(x\intervalco{0}{i}, i-1)$.
    Next, the \Op{make\_string}$(\sigma)$ operation creates the string~$\sigma$.
    Lastly, we use $\Op{concat}(x\intervalco{0}{i}, \Op{concat}(\sigma, x\intervaloo{i}{|x|}))$ to reinsert $\sigma$ between the two substrings.
    \item $\Op{LCE}(x_1, x_2, i_1,i_2)$: Given $x_1, x_2 \in X$ and $i_1 \in [|x_1|], i_2 \in [|x_2|]$, returns the longest common extension $\max\set{0 \le \ell \leq \min\set{|x_1| - i_1, |x_2| - i_2} : x_1\intervalco{i_1}{i_1 + \ell} = x_2\intervalco{i_2}{i_2 + \ell}}$. To implement this, using the two operations \Op{split}$(x_1,i-1)$ and \Op{split}$(x_2,j-1)$ we first separate the substrings $x_1\intervalco{i}{|x_1|}$ and $x_2\intervalco{j}{ |x_2|}$.
    Observe that the length of the longest common extension of the original strings is exactly the length of the longest common prefix of $x_1\intervalco{i}{|x_1|}$ and $x_2\intervalco{j}{ |x_2|}$ returned by the operation \Op{LCP}$(x_1\intervalco{i}{|x_1|}, x_2\intervalco{j}{ |x_2|})$.
\end{itemize}
\smallskip
Both \Op{update} and \Op{LCE} require a constant number of original operations that run in randomized time $\Oo(\log n)$, or deterministic time $\Oo(\log^{1 + o(1)}n)$.

Now, we are in the position to provide the proof of \cref{lem:set_ds}. Recall that this proof is in parts borrowed from~\cite{AxiotisBBJNTW21,CardinalI21}. 

\begin{proof}[Proof of {Lemma~\ref{lem:set_ds}}]
    We maintain the set $S\subseteq [u]$ as an indicator string $x_S \in \set{0,1}^{u}$ such that $i \in S$ if and only if $x_S[i] = 1$.
    \smallskip
    \begin{itemize}
        \setlength\parindent{1.5em}
        \item \Op{init}$(S)$: 
        Using repeated squaring, we construct the string $0^u$ by inserting $w=0$ and concatenating $w$ with itself $\log u$ times.
        Note that $0^u$ will remain in the data structure and can be used by other operations.
        We compute $x_S$ by updating $0^u$ using \Op{update}$(x_S, i, 1)$ for all indices $i \in S$.
        Since the repeated squaring takes time $\Oo(\log^2 u)$ and updating the elements takes time $\Oo(|S| \cdot \log u )$, the \Op{init} operation runs in time $\Oo(\log^2 u +|S| \cdot \log u )$.
        
        \item \Op{query}$(i)$: 
        As the original data structure already provides a query operation, we use \Op{query}$(x_S,i)$ that returns $x_S[i]$ in time $\Oo(\log u )$.

        \item \Op{sum}$(p)$: 
        We implement \Op{sum} in three steps.
        First, we will compute the string $x_{S+p}$ that represents the set $S +p$.
        Observe that $x_{S+p}$ is obtained by shifting $x_{S}$ by $p$ positions to the right.
        Thus, we extend $x_S$ using \Op{concat}$(0^{p}, x_S)$ where the string $0^p$ is split off the precomputed string $0^u$ using \Op{split}$(0^u,p-1)$.
        Then we trim it down to length~$u$ with \Op{split}$(0^{p} x_S, u-1)$. (Note that due to the assumption $\max(S + p) < u$ we only split off zeros in this step.)
        
        Second, note that the desired string $x_{S\cup (S+p)}$ is the result of the bit-wise OR of $x_S$ and $x_{S+p}$.
        We compute the set $D = \set{i \in [u] : x_S[i] \neq x_{S+p}[i]}$ that contains all indices at which $x_S$ and $x_{S+p}$ differ from each other.
        To this end, starting with $i \gets 0$, we will repeat the following process as long as $i < u$: Compute $\ell \leftarrow \Op{LCE}(x_S,x_{S+p},i,i)$ to determine the next index $i + \ell$ at which both strings differ, insert $D \leftarrow D \cup \set{i+\ell}$ and update $i \leftarrow i + \ell + 1$. 
        
        As the third and last step, we compute $x_{S\cup (S+p)}$ by updating $x_S$ using \Op{update}$(x_S, i, 1)$ for all indices $i \in D$.
        
        Creating the shifted string $x_{S+p}$ takes time $\Oo(\log u)$.
        Both the construction of set~$D$ and computing the string~$x_{S\cup (S+p)}$ require $|(S+p)\setminus S| + |S\setminus (S+p)|$ many operations that each run in time $\Oo(\log u)$.
        Since $|S+p| = |S|$, we have 
        \[
            |(S+p)\setminus S| = |S+p| - |(S+p) \cap S| = |S| -  |(S+p) \cap S| = |S\setminus (S+p)|
        \]
        and therefore $|(S+p)\setminus S| + |S\setminus (S+p)| = 2 \cdot |(S+p)\setminus S|$.
        In summary, the \Op{sum} operation takes time $\Oo(|(S+p)\setminus S| \cdot \log u)$.
        
        \item \Op{cap}$(d)$: 
        In order to set $x_S[i] = 0$ for all $i \in \intervaloo{d}{u}$, we separate the substring $x_S\intervalcc{0}{d}$ with \Op{split}$(x_S,d)$ and split the substring $0^{u-d-1}$ off the precomputed string $0^u$ using \Op{split}$(0^u,u-d-2)$.
        Then $x_S$ is assembled using \Op{concat}$(x_S\intervalcc{0}{d},0^{u-d-1})$.
        All three operations take time $\Oo(\log u)$.
    \end{itemize}
    \smallskip
    Following \cref{lem:dynamic_string_ds}, the deterministic running times can be obtained by worsening all operations by a factor $\log^{\order(1)} u$.
\end{proof}

%%%%%%%%%%%%%%%%%%%%%%%%%%%%%%%%%%%%%%%%%%%%%%%%%%%%%%%%%%%%%%%%
% OBTAINING AN OPTIMAL SCHEDULE
%%%%%%%%%%%%%%%%%%%%%%%%%%%%%%%%%%%%%%%%%%%%%%%%%%%%%%%%%%%%%%%%

\subsection{Obtaining an Optimal Schedule} \label{subsec:schedule}
In the previous sections we have argued that the optimal value $\OPT$ (i.e., the maximum total processing of a feasible schedule) can be computed in near-linear time $\widetilde\Order(P)$. In this section we explain how the actual optimal schedule can be computed by a deterministic post-processing routine in time $\Order(n)$.

The idea is, as is standard for dynamic programming algorithms, to trace through the sets~\makebox{$S_0, \dots, S_n$} in reverse order. To make this traversal efficient, we slightly modify our algorithm to additionally compute an array $A \intervalcc0P$ such that $A[s] = \min\set{j \in [n] : s \in S_{j+1}}$. Intuitively,~$A[s]$ stores the smallest job $j$ such that there exists a feasible schedule with total processing time $s$ that contains $j$ and a subset of the jobs $\set{0, \dots, j-1}$. It is easy to appropriately maintain the array $A$ whenever an element is inserted into $S_{j+1}'$ without worsening the asymptotic running time. Then, in order to compute an optimal schedule, we apply the following algorithm: We initialize $J \gets \emptyset$ and~\smash{$s \gets \OPT$}. We repeatedly retrieve the next job~\makebox{$j \gets A[s]$} and update~\makebox{$J \gets J \cup \set{j}$} and $s \gets s - p_j$, until $s = 0$. In each step, we identify a job $j$ that is contained in the optimal schedule, and thus $J$ is an optimal schedule once the process has terminated. In fact, the same idea can be used to retrieve a feasible schedule for \emph{any} given processing time $s \in S_n$.

%%%%%%%%%%%%%%%%%%%%%%%%%%%%%%%%%%%%%%%%%%%%%%%%%%%%%%%%%%%%%%%%%%%%%%%%%%%%%%%
% MANY MACHINES                                                               %
%%%%%%%%%%%%%%%%%%%%%%%%%%%%%%%%%%%%%%%%%%%%%%%%%%%%%%%%%%%%%%%%%%%%%%%%%%%%%%%

\section{Generalization to Multiple Machines}
\label{sec:manymachines}

In this section, we show that our algorithm for \MTPT{} can be extended to~\mMTPT{}. Since it follows the same approach as the single machine algorithm, we will keep this section short and concise.
For more details refer to \cref{sec:algorithm-for-one-machine}.

Let $e_0, \dots, e_{m-1}$ denote the standard unit vectors of~$\Int^m$, then we recursively define the sets $S_0, \dots, S_n \subseteq [0, P]^m$ as follows:
\begin{alignat*}{2}
    S_0 &= \set{0}, \\
    S_{j+1}' &= S_{j} + \set{0, p_j \cdot e_0, \dots, p_j \cdot e_{m-1}} \qquad &&(j \in [n]),  \\
    S_{j+1} &= S_{j+1}' \cap [0, d_j]^m \qquad &&(j \in [n]). 
\end{alignat*}
As before, the optimal value is the maximum entry in $S_n$:

\begin{observation}[Lawler and Moore~\cite{LawlerM69}]
    \label{obs:generalized-s-sets}
    There is a feasible schedule of total processing time $t$ if and only if there is some $s \in S_n$ with $s_0 + \dots + s_{m-1} = t$.
\end{observation}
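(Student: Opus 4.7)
The plan is to mirror the single-machine proof by establishing, for each $j \in \set{0, \dots, n}$, the stronger invariant that $S_j$ equals the set of load vectors $(s_0, \dots, s_{m-1})$ arising from feasible schedules of the jobs $\set{0, \dots, j-1}$ in which machine $k$ receives total processing time exactly $s_k$. The observation then follows immediately from the case $j = n$, since the total processing time of any schedule decomposes as $s_0 + \dots + s_{m-1}$.

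I would proceed by induction on $j$. The base case $j = 0$ holds trivially since $S_0 = \set{0}$ and the empty schedule loads every machine with $0$. For the inductive step, I would first invoke the per-machine analogue of Lawler and Moore's classical exchange argument: on each individual machine the early jobs may, without loss of generality, be scheduled in increasing order of due date. Hence extending any feasible schedule of $\set{0, \dots, j-1}$ by job $j$ amounts to either dropping $j$ or appending it at the very end of some machine $k$'s queue, which produces exactly the load vectors in $S_j + \set{0, p_j e_0, \dots, p_j e_{m-1}} = S_{j+1}'$. Conversely, every element of $S_{j+1}'$ arises this way.

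The subtle point, and the only genuinely new ingredient beyond \cref{obs:s-sets}, is verifying the cap operation. Appending job $j$ to machine $k$ yields a feasible extension iff the resulting load on machine $k$ is at most $d_j$, which might suggest capping only the modified coordinate. The construction instead imposes the uniform requirement that \emph{every} coordinate be at most $d_j$. Thanks to the EDD ordering $d_0 \leq \dots \leq d_{n-1}$ together with the inductive invariant, any unmodified coordinate $k'$ already satisfies $s_{k'} \leq d_{j_{k'}} \leq d_j$, where $j_{k'}$ denotes the latest job placed on machine $k'$ (and $s_{k'} = 0$ if there is none). Consequently, the uniform cap $[0, d_j]^m$ removes precisely the infeasible extensions and nothing more, closing the induction. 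I expect the main point requiring care to be this verification, namely that the uniform cap coincides with a per-coordinate conditional cap under the invariant; beyond that the argument is a direct generalization of the single-machine case.
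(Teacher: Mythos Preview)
Your proposal is correct and takes the same approach as the paper, which treats this as a folklore observation from Lawler and Moore and supplies only a one-line justification (``each job is either scheduled on exactly one machine or not at all''). Your explicit verification that the uniform cap $[0,d_j]^m$ does not over-prune---since unmodified coordinates already lie in $[0,d_{j-1}] \subseteq [0,d_j]$ by the EDD ordering---is precisely the one nontrivial detail, and the paper glosses over it.
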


The crucial difference to before is that here all $s \in S_{j}$ are vectors where their $i$-th entry corresponds to the $i$-th machine.
Because each job is either scheduled on exactly one machine or not at all, we consider all scheduling possibilities of job $j$ with $S_{j} + \set{0, p_j \cdot e_0, \dots, p_j \cdot e_{m-1}}$.
As our goal is again to bound the total number of insertions, see the following lemma:

\begin{lemma}[Generalized Bounded Insertions]
    \label{lem:generalized-bounded-insertions}
    It holds that
    \[
        \sum_{j \in [n]} \big|S_{j+1}' \setminus S_{j}\big| \leq (m+1)  \cdot (P+1)^m.
    \]
\end{lemma}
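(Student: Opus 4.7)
}
The plan is to mimic the proof of \cref{lem:s-sets} by splitting the sum according to whether a newly inserted vector survives the subsequent $\Op{cap}$ or not. Concretely, I would write
\[
    \sum_{j\in[n]} \big|S'_{j+1}\setminus S_j\big|
    = \sum_{j\in[n]} \big|(S'_{j+1}\setminus S_j) \cap [0,d_j]^m\big|
    + \sum_{j\in[n]} \big|(S'_{j+1}\setminus S_j) \setminus [0,d_j]^m\big|,
\]
where the first sum counts insertions that persist into $S_{j+1}, \dots, S_n$, and the second sum counts insertions that are immediately deleted by the cap.

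For the first sum, the key observation transfers verbatim from the single-machine case. If $x \in S'_{j+1}\setminus S_j$ and $x \in [0,d_j]^m$, then $x \in S_{j+1}$; and since $d_j \le d_{j+1} \le \dots \le d_{n-1}$, the vector $x$ belongs to every subsequent $S_k$ as well. Hence each $x \in [0,P]^m$ is counted by at most one summand, and this sum is bounded by $|[0,P]^m| = (P+1)^m$.

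For the second sum, I would exploit the structure of $S'_{j+1}$. Since $S_j \subseteq [0, d_{j-1}]^m$ (setting $d_{-1}:=0$ for the base case), and since $S'_{j+1} = S_j + \{0, p_j e_0, \dots, p_j e_{m-1}\}$, every vector $s' \in S'_{j+1}$ has at most one coordinate exceeding $d_{j-1}$, and that coordinate lies in $[0, d_{j-1}+p_j]$. For $s'$ to lie outside $[0,d_j]^m$ there must be an index $i$ with $s'_i \in (d_j, d_{j-1} + p_j]$, giving at most $d_{j-1} + p_j - d_j \le p_j$ choices for $s'_i$ (using $d_{j-1} \le d_j$), while the remaining $m-1$ coordinates each lie in $[0,d_{j-1}] \subseteq [0,P]$, giving at most $(P+1)^{m-1}$ choices; there are $m$ choices for the overflowing coordinate itself. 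Summing,
\[
    \sum_{j\in[n]} \big|(S'_{j+1}\setminus S_j) \setminus [0,d_j]^m\big|
    \le \sum_{j\in[n]} m \cdot p_j \cdot (P+1)^{m-1}
    \le m\cdot P \cdot (P+1)^{m-1}
    \le m (P+1)^m.
\]
Adding the two parts gives $(m+1)(P+1)^m$, as claimed.

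The main obstacle is simply the bookkeeping in the second part: one must realise that the structure of the update $+\{0, p_j e_0,\dots,p_j e_{m-1}\}$ still forces at most one coordinate to exceed $d_{j-1}$, which is exactly what makes the per-step bound $m\cdot p_j \cdot (P+1)^{m-1}$ (rather than something like $p_j^m$) work. Once this is in place, the rest is a direct lift of the single-machine argument.
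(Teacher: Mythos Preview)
Your proposal is correct and follows essentially the same approach as the paper's own proof: the same split into surviving versus immediately-deleted insertions, the same ``each vector is counted at most once'' argument for the first part, and the same coordinate-counting bound $m\cdot p_j\cdot (P+1)^{m-1}$ for the second part. Your explicit convention $d_{-1}:=0$ is a small improvement in clarity over the paper, which leaves this implicit.
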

\begin{proof}
    In the following, we consider two parts of the sum separately:
    \[
        \sum_{j\in[n]} \big|(S'_{j+1} \setminus S_{j})\big| = \sum_{j\in[n]} \big|(S'_{j+1} \setminus S_{j}) \cap \intervalcc{0}{d_{j}}^m\big| + \sum_{j\in[n]} \big|(S'_{j+1} \setminus S_{j}) \setminus \intervalcc{0}{d_{j}}^m\big|.
    \]
    In other words, in analogy to \cref{lem:s-sets}, we first count the number of elements that are irrevocably inserted into $S_{j+1}, \dots, S_n$ in the $j$-th step.
    Second, we count the number of elements that are inserted into $S_{j+1}'$ and instantly deleted in $S_{j+1}$.

    We bound the first sum with the following observation.
    For any $x \in \intervalcc{0}{P}^m$, it holds that if $x \in (S'_{j+1} \setminus S_{j})$ and $x \in \intervalcc{0}{d_j}$, then $x \in S_{j+1}, \dots, S_n$. 
    This follows directly from the assumption that $d_j \le d_{j+1}, \dots ,d_{n-1}$.
    Therefore, it holds that
    \[
        \big|\set{ j \in [n] : x \in (S'_{j+1} \setminus S_{j}) \cap \intervalcc{0}{d_j}^m}\big|  \le 1,
    \]
    for all $x \in \intervalcc{0}{P}^m$, and thus
    \begin{align*}
        \sum_{j\in[n]} \big|(S'_{j+1} \!\setminus\! S_{j}) \cap \intervalcc{0}{d_j}^m \big| 
        &= \!\!\!\!\!\sum_{x\in\intervalcc{0}{P}^m}\!\!\!\!\! \big|\set{ j \in [n] : x \in (S'_{j+1} \!\setminus\! S_{j}) \cap \intervalcc{0}{d_j}^m}\big| \le (P + 1)^m.
    \end{align*}   

    For the second sum, we bound $|(S'_{j+1} \setminus S_{j}) \setminus \intervalcc{0}{d_j}^m | \le |S_{j+1}' \setminus \intervalcc{0}{d_j}^m |$.
    Using the fact that~\makebox{$S_{j} \subseteq \intervalcc{0}{d_{j-1}}^m$}, we have that 
    \begin{align*}
        S_{j+1}' 
        &= S_{j} + \set{0, p_j \cdot e_0, \dots, p_j\cdot e_{m-1}} \\
        &\subseteq \intervalcc{0}{d_{j-1}}^m + \set{0, p_j \cdot e_0, \dots, p_j \cdot e_{m-1}} \eqqcolon V_{j+1}.
    \end{align*}
    As each job $j$ is scheduled on exactly one machine, we observe that $V_{j+1}$ is the set of vectors where all entries are in $\intervalcc{0}{d_{j-1}}$, except for possibly one entry that is in $\intervalcc{0}{d_{j-1} + p_j}$.
    Hence,~\makebox{$V_{j+1} \setminus  \intervalcc{0}{d_j}^m$} is the set of vectors where all entries are in $\intervalcc{0}{d_{j-1}}$, except for exactly one entry that is in $\intervaloc{d_j}{d_{j-1} + p_j}$.
    Next, we bound the size of $V_{j+1} \setminus \intervalcc{0}{d_j}^m$: There are $m$ options for the index of the special entry, there are $d_{j-1} + p_j - d_j$ options for the value of the special entry, and finally there are $(P+1)^{m-1}$ options for the other $m-1$ entries.
    Thus, 
    \begin{align*}
        \big|S_{j+1}' \setminus \intervalcc{0}{d_j}^m \big|
        &\le \big| V_{j+1} \setminus  \intervalcc{0}{d_j}^m \big| \\
        &\le m \cdot (P+1)^{m-1} \cdot (d_{j-1} + p_j - d_j) \\
        &\le m \cdot (P+1)^{m-1} \cdot p_j,
    \end{align*}
    where the final inequality follows from the assumption that $d_{j-1} \le d_j$.
    Consequently, the second sum is bounded by
    \[
        \sum_{j\in[n]} \big|S_{j+1}' \setminus \intervalcc{0}{d_j}^m \big| \le m\cdot(P+1)^{m-1} \cdot \sum_{j\in[n]} p_j \leq m \cdot (P+1)^m.
    \]
    Combining the bounds for both sums yields the overall bound.
\end{proof} 

Analogous to \cref{sec:algorithm-for-one-machine}, we use the generalized sum-cap data structure to efficiently maintain the generalized sets $S_0, \dots, S_n$.

\begin{lemma}[Generalized Sum-Cap Data Structure]
    \label{lem:generalized-set-ds}
    There is a randomized data structure maintaining a set $S \subseteq [u]^m$ that supports the following operations:
    \smallskip
    \begin{itemize}
        \item \parbox{2.2cm}{$\Op{init}(S)$:} Initializes the data structure to the given set $S \subseteq [u]^m$.\\\parbox{2.2cm}{~} Runs in time $\Oo(\log^2 u +|S| \cdot \log u )$.
        
        \item \parbox{2.2cm}{$\Op{query}(s)$:} Given $s \in [u]^m$, tests whether $s \in S$.\\\parbox{2.2cm}{~} Runs in time $\Order(\log u)$.
        
        \item \parbox{2.2cm}{$\Op{sum}(T)$:} Given $T \subseteq [u]^m$, updates $S \gets S + T$.\\\parbox{2.2cm}{~} Runs in time $\Order(|T| \cdot |(S + T) \setminus S| \cdot \log u)$ (where $S$ is as before the operation).

        \item \parbox{2.2cm}{$\Op{cap}(d)$:} Given $d \in [u]$, updates $S \gets S \cap [d]^m$.\\\parbox{2.2cm}{~} Runs in time $\Order(u^{m-1} \cdot \log u)$.
    \end{itemize}
    \smallskip
    If at any point during the execution an element $s \not\in [u]^m$ is attempted to be inserted, the data structure becomes invalid.
    Moreover, the data structure can be made deterministic at the cost of worsening all operations by a factor $\log^{\order(1)} u$.
\end{lemma}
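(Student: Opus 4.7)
The plan is to generalise the construction of \cref{lem:set_ds} by representing the set $S \subseteq [u]^m$ as a flat indicator string $x_S \in \set{0,1}^{u^m}$ maintained via \cref{lem:dynamic_string_ds}. Identifying each vector $s = (s_0,\dots,s_{m-1}) \in [u]^m$ with the flat index $\phi(s) = \sum_{i=0}^{m-1} s_i\, u^i$, I would require $x_S[\phi(s)] = 1$ iff $s \in S$. Since $m$ is constant, $\log(u^m) = \Order(\log u)$ and every elementary operation from \cref{lem:dynamic_string_ds} runs in $\Order(\log u)$ randomised time.

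For \Op{init}, I would first construct $0^{u^m}$ by $\Order(\log u)$ rounds of repeated squaring, caching $0^{u^k}$ for $k \in [m+1]$ for later reuse, and then flip the $|S|$ input bits using \Op{update}, giving total time $\Order(\log^2 u + |S|\log u)$. The operation \Op{query}$(s)$ reduces directly to a single call \Op{query}$(x_S,\phi(s))$.

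For \Op{sum}$(T)$, I would iterate through $t \in T$ one at a time. For each $t$, the vector shift of $S$ by $t$ corresponds to a flat shift of $x_S$ by $\phi(t)$, realised by prepending $0^{\phi(t)}$ (split off the cached $0^{u^m}$) and truncating back to length $u^m$. This flat shift correctly represents the vector shift because the contract $S + T \subseteq [u]^m$ guarantees that no coordinate of $s+t$ exceeds $u-1$, so no digit of $\phi(s)+\phi(t) = \phi(s+t)$ carries over. I would then OR the shifted string into a running union $x_U$ (initialised as a copy of $x_S$), by repeatedly calling \Op{LCE} between $x_{S+t}$ and $x_S$ to locate mismatches and \Op{update}ing $x_U$ at every position where $x_{S+t}=1$ but $x_S=0$. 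Since shifting is a bijection on $[u]^m$ we have $|S+t|=|S|$, so the mismatch set $\set{p : x_{S+t}[p]\neq x_S[p]}$ has size exactly $2\,|(S+t)\setminus S| \le 2\,|(S+T)\setminus S|$. Summing over $t \in T$ yields the claimed bound $\Order(|T|\cdot|(S+T)\setminus S|\cdot\log u)$.

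The main obstacle is \Op{cap}$(d)$, because the surviving region $[d]^m$ is no longer a prefix of the flat string --- it comprises $d^{m-1}$ disjoint aligned runs of length $d$. I would handle this by recursing on the dimension: view $x_S$ as $u$ consecutive blocks of length $u^{m-1}$, where the $k$-th block indexes vectors whose highest coordinate equals $k$. For $k \ge d$, replace the entire block with the cached $0^{u^{m-1}}$ using a constant number of \Op{split}s and \Op{concat}s; for $k < d$, recursively cap the block to $[d]^{m-1}$. Letting $T(m)$ denote the running time, the recurrence $T(m) \le u\cdot T(m-1) + \Order(u\log u)$ with base case $T(1)=\Order(\log u)$ solves to $T(m) = \Order(u^{m-1}\log u)$, as required. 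The deterministic variants of all four operations follow directly from the deterministic version of \cref{lem:dynamic_string_ds}, worsening each underlying call by a factor of $\log^{\order(1)} u$.
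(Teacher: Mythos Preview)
Your proposal is correct and follows essentially the same construction as the paper: encode $S$ via the base-$u$ bijection $\phi$, realise \Op{sum} by flat shifts plus \Op{LCE}-based mismatch enumeration, and realise \Op{cap} by partitioning the string into $u^{m-1}$ length-$u$ pieces (you do this by a dimension recursion, the paper by a flat loop over all $(s_1,\dots,s_{m-1}) \in [u]^{m-1}$; the two are equivalent and yield the same $\Order(u^{m-1}\log u)$ bound). The only small slip is that your \Op{sum}$(T)$ initialises the running union as $x_S$ and therefore returns $S \cup (S+T)$ rather than $S+T$ when $0 \notin T$; the paper fixes this by first shifting so that without loss of generality $0 \in T$, and the same one-line fix applies to your version.
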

\begin{proof}
    Let $U = u^m$. Let $\phi:[u]^m \rightarrow [U]$ be the bijection defined by $\phi(s) = \sum_{i \in [m]} s_i u^i$. We extend the definition to sets~\makebox{$S \subseteq [u]^m$} via $\phi(S) = \set{\phi(s) : s \in S}$. We maintain the set~\makebox{$S\subseteq [u]^m$} as the indicator string of $\phi(S)$, namely $x_{\phi(S)} \in \set{0,1}^U$, such that $i \in S$ if and only if $x_{\phi(S)}[\phi(i)] = 1$. In other words, we store for each $s \in [u]^m$, listed in lexicographical order, whether $s \in S$.
    \smallskip
    \begin{itemize}
        \setlength\parindent{1.5em}
        \item \Op{init}$(S)$: 
        The string $0^{U}$ is constructed using repeated squaring by inserting $x=0$ and concatenating $x$ with itself $m \log u$ times.
        It will remain in the data structure available to other operations.
        Repeatedly using \Op{update}$(\cdot, \phi(s), 1)$ for all entries $s \in S$, we update the string $0^{U}$ to obtain $x_{\phi(S)}$.
        As repeated squaring takes time $\Oo(\log^2 u)$ and updating the elements takes time $\Oo(|S| \cdot \log u )$, the \Op{init} operation runs in time $\Oo(\log^2 u +|S| \cdot \log u)$.
        
        \item \Op{query}$(s)$: 
        Using \Op{query}$(x_{\phi(S)},s)$ from the string data structure allows us to return~$x_{\phi(S)}[s]$ in time $\Oo(\log u)$.
        
        \item \Op{sum}$(T)$: We can assume without loss of generality that $0 \in T$ (as otherwise we can simply shift $T$ and $x_{\phi(S)}$ appropriately). Fix an arbitrary nonzero element $p \in T$. Analogously to \cref{lem:set_ds}, we first show how to compute the set $(S + p) \setminus S$ in output-sensitive time.
        The string~$x_{\phi(S+p)}$ representing the set $S +p$ can be computed using the two following facts.
        If \makebox{$x, y, x+y \in [u]^m$}, then it holds that $\phi(x+y) = \phi(x) + \phi(y)$.
        Further, if $p \in [u]^m$ and $S, S+p \subseteq [u]^m$, then~\makebox{$\phi(S+p) = \phi(S) + \phi(p)$}.
        Therefore, $x_{\phi(S+p)}$ is $x_{\phi(S)}$ up to a shift of $\phi(p)$, and can be obtained by \Op{split}$(\Op{concat}(0^{\phi(p)}, x_{\phi(S)}), U-1)$.
        
        Similarly to \cref{lem:set_ds}, repeatedly using \Op{LCE} queries allows us to first compute the set $\set{i \subseteq [U] : x_{\phi(S)}[i] \neq x_{\phi(S+p)}[i]}$ and then read off the symmetric difference of $S$ and~\makebox{$S + p$}, denoted by $D_p = (S \setminus (S + p)) \cup ((S + p) \setminus S)$.
        We repeat the same for all other nonzero elements $p \in T$.
        Let $D = \bigcup_{p \in T \setminus \set{0}} D_p$, then we have $D \supseteq (S + T) \setminus S$. Thus, we can update the indicator string $x_{\phi(S)}$ by calling $\Op{update}(\cdot, \phi(s), 1)$ for all $s \in D$.

        In order to bound the running time, we bound the size of the sets $D_p$ and $D$. Using that~\makebox{$|S \setminus (S + p)| = |(S + p) \setminus S|$}, we have that $|D_p| \leq 2 |(S + p) \setminus S| \leq 2 |(S + T) \setminus S|$, and therefore $|D| \leq 2|T| \cdot |(S + T) \setminus S|$. In total, we used $\Order(|D|)$ data structure operations, leading to a running time of $\Oo(|T| \cdot |(S+T)\setminus S| \cdot \log u)$.

        \item \Op{cap}$(d)$: 
        We delete all vectors $s \in S$ with at least one entry that is larger than~$d$ as follows. We enumerate all $(m-1)$-tuples $(s_1, \dots, s_{m-1}) \in [u]^{m-1}$. Recall that the vectors are stored in lexicographical order. Thus, the set of vectors $(s_0, s_1, \dots, s_{m-1})$ where~$s_0$ ranges over~$[u]$ and $s_1, \dots, s_{m-1}$ are fixed, is represented by a length-$u$ substring of $x_{\phi(S)}$. Specifically, the substring $x_{\phi(S)}\intervalcc{\phi(0, s_1, \dots, s_{m-1})}{\phi(u-1, s_1, \dots, s_{m-1})}$. We~distinguish two cases: If $\max_{i=1}^{m-1} s_i > d$, then the entire substring is replaced with~$0^u$. Otherwise, we retain its length-$(d+1)$ prefix and replace its suffix is by $0^{u-d-1}$.
        It takes~$\Oo(u^{m-1})$ \Op{concat} and \Op{split} operations to perform these modifications, running in total time $\Order(u^{m-1} \cdot \log u)$.
    \end{itemize}
    \smallskip
    Again, following \cref{lem:dynamic_string_ds} the deterministic running time of the operations differs by replacing $\log u$ with $\log^{1+\order(1)} u$.
\end{proof}

Based on \cref{lem:generalized-bounded-insertions} and \ref{lem:generalized-set-ds}, we show the following generalization of our main result.

\ManyMachinesAlgorithm*
\begin{proof}
    Analogous to \cref{theo:mtpt}, we use the algorithm:
    We initialize the data structure from \cref{lem:generalized-set-ds} (used with $u = P + 1$) with \Op{init}$(S_0)$.
    For~all~$j \gets 0, \dots, n-1$, we repeatedly use the operations \Op{sum}$(\set{0, p_j \cdot e_0, \dots, p_j \cdot e_{m-1}})$ and \Op{cap}$(d_j)$ to compute the sets~$S_{j+1}'$ and $S_{j+1}$. 
    We return the optimal value contained in $S_n$ as described in \cref{theo:mtpt}.
    Our algorithm is correct due to \cref{obs:generalized-s-sets}.
    Finally, using \cref{lem:generalized-set-ds,lem:generalized-bounded-insertions} and the assumption $n \leq P$, we can bound the dominant term of the running time by
    \begin{equation*}
        \Order\parens*{\sum_{j \in [n]} (m \cdot |S_{j+1}' \setminus S_j| \cdot \log P + P^{m-1} \log P)} = \Order(P^m \log P).
    \end{equation*}
    (Recall that $m$ is constant.) Thus, we obtain a randomized running time of $\Order(P^m \log P)$, and similarly a deterministic running time of~$\Order(P^m \log^{1+\order(1)} P)$.
\end{proof}

%%%%%%%%%%%%%%%%%%%%%%%%%%%%%%%%%%%%%%%%%%%%%%%%%%%%%%%%%%%%%%%%%%%%%%%%%%%%%%%
% BIBLIOGRAPHY                                                                %
%%%%%%%%%%%%%%%%%%%%%%%%%%%%%%%%%%%%%%%%%%%%%%%%%%%%%%%%%%%%%%%%%%%%%%%%%%%%%%%

%\bibliographystyle{plainurl}
%\bibliography{paper}

\printbibliography

@inproceedings{FischerW24,
	author = {Nick Fischer and Leo Wennmann},
	bibsource = {dblp computer science bibliography, https://dblp.org},
	biburl = {https://dblp.org/rec/conf/icalp/FischerW24.bib},
	booktitle = {51st International Colloquium on Automata, Languages, and Programming, {ICALP} 2024, July 8-12, 2024, Tallinn, Estonia},
	date-added = {2025-06-29 17:57:48 +0200},
	date-modified = {2025-06-29 17:57:56 +0200},
	doi = {10.4230/LIPICS.ICALP.2024.64},
	editor = {Karl Bringmann and Martin Grohe and Gabriele Puppis and Ola Svensson},
	pages = {64:1--64:15},
	publisher = {Schloss Dagstuhl - Leibniz-Zentrum f{\"{u}}r Informatik},
	series = {LIPIcs},
	timestamp = {Wed, 21 Aug 2024 22:46:00 +0200},
	title = {Minimizing Tardy Processing Time on a Single Machine in Near-Linear Time},
	url = {https://doi.org/10.4230/LIPIcs.ICALP.2024.64},
	volume = {297},
	year = {2024},
	bdsk-url-1 = {https://doi.org/10.4230/LIPIcs.ICALP.2024.64}}

@inproceedings{BateniHSS18,
	author = {MohammadHossein Bateni and MohammadTaghi Hajiaghayi and Saeed Seddighin and Cliff Stein},
	bibsource = {dblp computer science bibliography, https://dblp.org},
	biburl = {https://dblp.org/rec/conf/stoc/BateniHSS18.bib},
	booktitle = {50th Annual {ACM} {SIGACT} Symposium on Theory of Computing ({STOC} 2018)},
	doi = {10.1145/3188745.3188876},
	editor = {Ilias Diakonikolas and David Kempe and Monika Henzinger},
	pages = {1269--1282},
	publisher = {{ACM}},
	timestamp = {Thu, 27 Aug 2020 10:31:04 +0200},
	title = {Fast algorithms for knapsack via convolution and prediction},
	url = {https://doi.org/10.1145/3188745.3188876},
	year = {2018},
	bdsk-url-1 = {https://doi.org/10.1145/3188745.3188876}}

@inproceedings{KempaK22,
	author = {Dominik Kempa and Tomasz Kociumaka},
	bibsource = {dblp computer science bibliography, https://dblp.org},
	biburl = {https://dblp.org/rec/conf/stoc/KempaK22.bib},
	booktitle = {54th Annual {ACM} {SIGACT} Symposium on Theory of Computing ({STOC} 2022)},
	doi = {10.1145/3519935.3520061},
	editor = {Stefano Leonardi and Anupam Gupta},
	pages = {1657--1670},
	publisher = {{ACM}},
	timestamp = {Mon, 26 Jun 2023 20:46:27 +0200},
	title = {Dynamic suffix array with polylogarithmic queries and updates},
	url = {https://doi.org/10.1145/3519935.3520061},
	year = {2022},
	bdsk-url-1 = {https://doi.org/10.1145/3519935.3520061}}

@inproceedings{AlstrupBR00,
	author = {Stephen Alstrup and Gerth St{\o}lting Brodal and Theis Rauhe},
	bibsource = {dblp computer science bibliography, https://dblp.org},
	biburl = {https://dblp.org/rec/conf/soda/AlstrupBR00.bib},
	booktitle = {11th Annual {ACM-SIAM} Symposium on Discrete Algorithms ({SODA} 2000)},
	editor = {David B. Shmoys},
	pages = {819--828},
	publisher = {{ACM/SIAM}},
	timestamp = {Fri, 07 Dec 2012 17:02:08 +0100},
	title = {Pattern matching in dynamic texts},
	url = {http://dl.acm.org/citation.cfm?id=338219.338645},
	year = {2000},
	bdsk-url-1 = {http://dl.acm.org/citation.cfm?id=338219.338645}}

@inproceedings{Bringmann17,
	author = {Karl Bringmann},
	bibsource = {dblp computer science bibliography, https://dblp.org},
	biburl = {https://dblp.org/rec/conf/soda/Bringmann17.bib},
	booktitle = {28th Annual {ACM-SIAM} Symposium on Discrete Algorithms ({SODA} 2017)},
	doi = {10.1137/1.9781611974782.69},
	editor = {Philip N. Klein},
	pages = {1073--1084},
	publisher = {{SIAM}},
	timestamp = {Tue, 02 Feb 2021 17:07:33 +0100},
	title = {A Near-Linear Pseudopolynomial Time Algorithm for Subset Sum},
	url = {https://doi.org/10.1137/1.9781611974782.69},
	year = {2017},
	bdsk-url-1 = {https://doi.org/10.1137/1.9781611974782.69}}

@inproceedings{GawrychowskiKKL18,
	author = {Pawel Gawrychowski and Adam Karczmarz and Tomasz Kociumaka and Jakub Lacki and Piotr Sankowski},
	bibsource = {dblp computer science bibliography, https://dblp.org},
	biburl = {https://dblp.org/rec/conf/soda/GawrychowskiKKL18.bib},
	booktitle = {29th Annual {ACM-SIAM} Symposium on Discrete Algorithms ({SODA} 2018)},
	doi = {10.1137/1.9781611975031.99},
	editor = {Artur Czumaj},
	pages = {1509--1528},
	publisher = {{SIAM}},
	timestamp = {Mon, 26 Jun 2023 20:43:57 +0200},
	title = {Optimal Dynamic Strings},
	url = {https://doi.org/10.1137/1.9781611975031.99},
	year = {2018},
	bdsk-url-1 = {https://doi.org/10.1137/1.9781611975031.99}}

@inproceedings{AxiotisBJTW19,
	author = {Kyriakos Axiotis and Arturs Backurs and Ce Jin and Christos Tzamos and Hongxun Wu},
	bibsource = {dblp computer science bibliography, https://dblp.org},
	biburl = {https://dblp.org/rec/conf/soda/AxiotisBJTW19.bib},
	booktitle = {30th Annual {ACM-SIAM} Symposium on Discrete Algorithms ({SODA} 2019)},
	doi = {10.1137/1.9781611975482.4},
	editor = {Timothy M. Chan},
	pages = {58--69},
	publisher = {{SIAM}},
	timestamp = {Wed, 19 Jul 2023 08:08:09 +0200},
	title = {Fast Modular Subset Sum using Linear Sketching},
	url = {https://doi.org/10.1137/1.9781611975482.4},
	year = {2019},
	bdsk-url-1 = {https://doi.org/10.1137/1.9781611975482.4}}

@inproceedings{BringmannW21,
	author = {Karl Bringmann and Philip Wellnitz},
	bibsource = {dblp computer science bibliography, https://dblp.org},
	biburl = {https://dblp.org/rec/conf/soda/BringmannW21.bib},
	booktitle = {32nd Annual {ACM-SIAM} Symposium on Discrete Algorithms ({SODA} 2021)},
	doi = {10.1137/1.9781611976465.107},
	editor = {D{\'{a}}niel Marx},
	pages = {1777--1796},
	publisher = {{SIAM}},
	timestamp = {Thu, 15 Jul 2021 13:48:57 +0200},
	title = {On Near-Linear-Time Algorithms for Dense Subset Sum},
	url = {https://doi.org/10.1137/1.9781611976465.107},
	year = {2021},
	bdsk-url-1 = {https://doi.org/10.1137/1.9781611976465.107}}

@inproceedings{DengMZ23,
	author = {Mingyang Deng and Xiao Mao and Ziqian Zhong},
	bibsource = {dblp computer science bibliography, https://dblp.org},
	biburl = {https://dblp.org/rec/conf/soda/DengMZ23.bib},
	booktitle = {32nd Annual {ACM-SIAM} Symposium on Discrete Algorithms ({SODA} 2021)},
	doi = {10.1137/1.9781611977554.CH114},
	editor = {Nikhil Bansal and Viswanath Nagarajan},
	pages = {2980--2990},
	publisher = {{SIAM}},
	timestamp = {Fri, 17 Feb 2023 09:28:57 +0100},
	title = {On Problems Related to Unbounded {SubsetSum}: {A} Unified Combinatorial Approach},
	url = {https://doi.org/10.1137/1.9781611977554.ch114},
	year = {2023},
	bdsk-url-1 = {https://doi.org/10.1137/1.9781611977554.ch114}}

@inproceedings{KleinPR23,
	author = {Kim{-}Manuel Klein and Adam Polak and Lars Rohwedder},
	bibsource = {dblp computer science bibliography, https://dblp.org},
	biburl = {https://dblp.org/rec/conf/soda/Klein0R23.bib},
	booktitle = {34th Annual {ACM-SIAM} Symposium on Discrete Algorithms ({SODA} 2023)},
	doi = {10.1137/1.9781611977554.CH112},
	editor = {Nikhil Bansal and Viswanath Nagarajan},
	pages = {2947--2960},
	publisher = {{SIAM}},
	timestamp = {Mon, 05 Feb 2024 20:28:16 +0100},
	title = {On Minimizing Tardy Processing Time, Max-Min Skewed Convolution, and Triangular Structured {ILPs}},
	url = {https://doi.org/10.1137/1.9781611977554.ch112},
	year = {2023},
	bdsk-url-1 = {https://doi.org/10.1137/1.9781611977554.ch112}}

@inproceedings{ChenLMZ24,
	author = {Lin Chen and Jiayi Lian and Yuchen Mao and Guochuan Zhang},
	booktitle = {35th Annual {ACM-SIAM} Symposium on Discrete Algorithms ({SODA} 2024)},
	doi = {10.1137/1.9781611977912.171},
	pages = {4828--4848},
	publisher = {{SIAM}},
	title = {Faster Algorithms for Bounded Knapsack and Bounded Subset Sum Via Fine-Grained Proximity Results},
	url = {https://doi.org/10.1137/1.9781611977912.171},
	year = {2024},
	bdsk-url-1 = {https://doi.org/10.1137/1.9781611977912.171}}

@inproceedings{KunnemannPS17,
	author = {Marvin K{\"{u}}nnemann and Ramamohan Paturi and Stefan Schneider},
	bibsource = {dblp computer science bibliography, https://dblp.org},
	biburl = {https://dblp.org/rec/conf/icalp/KunnemannPS17.bib},
	booktitle = {44th International Colloquium on Automata, Languages, and Programming ({ICALP} 2017)},
	doi = {10.4230/LIPICS.ICALP.2017.21},
	editor = {Ioannis Chatzigiannakis and Piotr Indyk and Fabian Kuhn and Anca Muscholl},
	pages = {21:1--21:15},
	publisher = {Schloss Dagstuhl - Leibniz-Zentrum f{\"{u}}r Informatik},
	series = {LIPIcs},
	timestamp = {Tue, 11 Feb 2020 15:52:14 +0100},
	title = {On the Fine-Grained Complexity of One-Dimensional Dynamic Programming},
	url = {https://doi.org/10.4230/LIPIcs.ICALP.2017.21},
	volume = {80},
	year = {2017},
	bdsk-url-1 = {https://doi.org/10.4230/LIPIcs.ICALP.2017.21}}

@inproceedings{AxiotisT19,
	author = {Kyriakos Axiotis and Christos Tzamos},
	bibsource = {dblp computer science bibliography, https://dblp.org},
	biburl = {https://dblp.org/rec/conf/icalp/AxiotisT19.bib},
	booktitle = {46th International Colloquium on Automata, Languages, and Programming ({ICALP} 2019)},
	doi = {10.4230/LIPICS.ICALP.2019.19},
	editor = {Christel Baier and Ioannis Chatzigiannakis and Paola Flocchini and Stefano Leonardi},
	pages = {19:1--19:13},
	publisher = {Schloss Dagstuhl - Leibniz-Zentrum f{\"{u}}r Informatik},
	series = {LIPIcs},
	timestamp = {Tue, 27 Dec 2022 09:06:31 +0100},
	title = {Capacitated Dynamic Programming: Faster Knapsack and Graph Algorithms},
	url = {https://doi.org/10.4230/LIPIcs.ICALP.2019.19},
	volume = {132},
	year = {2019},
	bdsk-url-1 = {https://doi.org/10.4230/LIPIcs.ICALP.2019.19}}

@inproceedings{PolakRW21,
	author = {Adam Polak and Lars Rohwedder and Karol Wegrzycki},
	bibsource = {dblp computer science bibliography, https://dblp.org},
	biburl = {https://dblp.org/rec/conf/icalp/0001RW21.bib},
	booktitle = {48th International Colloquium on Automata, Languages, and Programming ({ICALP} 2021)},
	doi = {10.4230/LIPICS.ICALP.2021.106},
	editor = {Nikhil Bansal and Emanuela Merelli and James Worrell},
	pages = {106:1--106:19},
	publisher = {Schloss Dagstuhl - Leibniz-Zentrum f{\"{u}}r Informatik},
	series = {LIPIcs},
	timestamp = {Tue, 15 Feb 2022 07:54:27 +0100},
	title = {Knapsack and Subset Sum with Small Items},
	url = {https://doi.org/10.4230/LIPIcs.ICALP.2021.106},
	volume = {198},
	year = {2021},
	bdsk-url-1 = {https://doi.org/10.4230/LIPIcs.ICALP.2021.106}}

@inproceedings{BringmannC22,
	author = {Karl Bringmann and Alejandro Cassis},
	bibsource = {dblp computer science bibliography, https://dblp.org},
	biburl = {https://dblp.org/rec/conf/icalp/BringmannC22.bib},
	booktitle = {49th International Colloquium on Automata, Languages, and Programming ({ICALP} 2022)},
	doi = {10.4230/LIPICS.ICALP.2022.31},
	editor = {Mikolaj Bojanczyk and Emanuela Merelli and David P. Woodruff},
	pages = {31:1--31:21},
	publisher = {Schloss Dagstuhl - Leibniz-Zentrum f{\"{u}}r Informatik},
	series = {LIPIcs},
	timestamp = {Tue, 28 Jun 2022 17:14:58 +0200},
	title = {Faster Knapsack Algorithms via Bounded Monotone Min-Plus-Convolution},
	url = {https://doi.org/10.4230/LIPIcs.ICALP.2022.31},
	volume = {229},
	year = {2022},
	bdsk-url-1 = {https://doi.org/10.4230/LIPIcs.ICALP.2022.31}}

@inproceedings{Potepa21,
	author = {Krzysztof Pot\k{e}pa},
	bibsource = {dblp computer science bibliography, https://dblp.org},
	biburl = {https://dblp.org/rec/conf/esa/Potepa21.bib},
	booktitle = {29th Annual European Symposium on Algorithms, ({ESA} 2021)},
	doi = {10.4230/LIPICS.ESA.2021.76},
	editor = {Petra Mutzel and Rasmus Pagh and Grzegorz Herman},
	pages = {76:1--76:16},
	publisher = {Schloss Dagstuhl - Leibniz-Zentrum f{\"{u}}r Informatik},
	series = {LIPIcs},
	timestamp = {Tue, 31 Aug 2021 17:01:48 +0200},
	title = {Faster Deterministic Modular Subset Sum},
	url = {https://doi.org/10.4230/LIPIcs.ESA.2021.76},
	volume = {204},
	year = {2021},
	bdsk-url-1 = {https://doi.org/10.4230/LIPIcs.ESA.2021.76}}

@inproceedings{BringmannC23,
	author = {Karl Bringmann and Alejandro Cassis},
	bibsource = {dblp computer science bibliography, https://dblp.org},
	biburl = {https://dblp.org/rec/conf/esa/BringmannC23.bib},
	booktitle = {31st Annual European Symposium on Algorithms ({ESA} 2023)},
	doi = {10.4230/LIPICS.ESA.2023.24},
	editor = {Inge Li G{\o}rtz and Martin Farach{-}Colton and Simon J. Puglisi and Grzegorz Herman},
	pages = {24:1--24:16},
	publisher = {Schloss Dagstuhl - Leibniz-Zentrum f{\"{u}}r Informatik},
	series = {LIPIcs},
	timestamp = {Wed, 30 Aug 2023 17:07:25 +0200},
	title = {Faster 0-1-Knapsack via Near-Convex Min-Plus-Convolution},
	url = {https://doi.org/10.4230/LIPIcs.ESA.2023.24},
	volume = {274},
	year = {2023},
	bdsk-url-1 = {https://doi.org/10.4230/LIPIcs.ESA.2023.24}}

@inproceedings{JinW19,
	author = {Ce Jin and Hongxun Wu},
	bibsource = {dblp computer science bibliography, https://dblp.org},
	biburl = {https://dblp.org/rec/conf/soda/JinW19.bib},
	booktitle = {2nd Symposium on Simplicity in Algorithms ({SOSA} 2019)},
	doi = {10.4230/OASICS.SOSA.2019.17},
	editor = {Jeremy T. Fineman and Michael Mitzenmacher},
	pages = {17:1--17:6},
	publisher = {Schloss Dagstuhl - Leibniz-Zentrum f{\"{u}}r Informatik},
	series = {OASIcs},
	timestamp = {Wed, 19 Jul 2023 08:08:09 +0200},
	title = {A Simple Near-Linear Pseudopolynomial Time Randomized Algorithm for Subset Sum},
	url = {https://doi.org/10.4230/OASIcs.SOSA.2019.17},
	volume = {69},
	year = {2019},
	bdsk-url-1 = {https://doi.org/10.4230/OASIcs.SOSA.2019.17}}

@inproceedings{AxiotisBBJNTW21,
	author = {Kyriakos Axiotis and Arturs Backurs and Karl Bringmann and Ce Jin and Vasileios Nakos and Christos Tzamos and Hongxun Wu},
	bibsource = {dblp computer science bibliography, https://dblp.org},
	biburl = {https://dblp.org/rec/conf/sosa/AxiotisBBJNTW21.bib},
	booktitle = {4th Symposium on Simplicity in Algorithms ({SOSA} 2021)},
	doi = {10.1137/1.9781611976496.6},
	editor = {Hung Viet Le and Valerie King},
	pages = {57--67},
	publisher = {{SIAM}},
	timestamp = {Wed, 19 Jul 2023 08:08:09 +0200},
	title = {Fast and Simple Modular Subset Sum},
	url = {https://doi.org/10.1137/1.9781611976496.6},
	year = {2021},
	bdsk-url-1 = {https://doi.org/10.1137/1.9781611976496.6}}

@inproceedings{CardinalI21,
	author = {Jean Cardinal and John Iacono},
	bibsource = {dblp computer science bibliography, https://dblp.org},
	biburl = {https://dblp.org/rec/conf/sosa/CardinalI21.bib},
	booktitle = {4th Symposium on Simplicity in Algorithms ({SOSA} 2021)},
	doi = {10.1137/1.9781611976496.5},
	editor = {Hung Viet Le and Valerie King},
	pages = {45--56},
	publisher = {{SIAM}},
	timestamp = {Wed, 17 Mar 2021 13:30:03 +0100},
	title = {Modular Subset Sum, Dynamic Strings, and Zero-Sum Sets},
	url = {https://doi.org/10.1137/1.9781611976496.5},
	year = {2021},
	bdsk-url-1 = {https://doi.org/10.1137/1.9781611976496.5}}

@inproceedings{SchieberS23,
	author = {Baruch Schieber and Pranav Sitaraman},
	bibsource = {dblp computer science bibliography, https://dblp.org},
	biburl = {https://dblp.org/rec/conf/wads/SchieberS23.bib},
	booktitle = {18th International Symposium on Algorithms and Data Structures ({WADS} 2023)},
	doi = {10.1007/978-3-031-38906-1\_42},
	editor = {Pat Morin and Subhash Suri},
	pages = {637--643},
	publisher = {Springer},
	series = {Lecture Notes in Computer Science},
	timestamp = {Tue, 01 Aug 2023 13:49:25 +0200},
	title = {Quick Minimization of Tardy Processing Time on a Single Machine},
	url = {https://doi.org/10.1007/978-3-031-38906-1\_42},
	volume = {14079},
	year = {2023},
	bdsk-url-1 = {https://doi.org/10.1007/978-3-031-38906-1%5C_42}}

@inproceedings{PughT89,
	author = {William W. Pugh and Tim Teitelbaum},
	bibsource = {dblp computer science bibliography, https://dblp.org},
	biburl = {https://dblp.org/rec/conf/popl/PughT89.bib},
	booktitle = {16th Annual {ACM} Symposium on Principles of Programming Languages ({POPL} 1989)},
	doi = {10.1145/75277.75305},
	pages = {315--328},
	publisher = {{ACM} Press},
	timestamp = {Mon, 04 Apr 2022 21:23:55 +0200},
	title = {Incremental Computation via Function Caching},
	url = {https://doi.org/10.1145/75277.75305},
	year = {1989},
	bdsk-url-1 = {https://doi.org/10.1145/75277.75305}}

@article{SundarT94,
	author = {Rajamani Sundar and Robert Endre Tarjan},
	bibsource = {dblp computer science bibliography, https://dblp.org},
	biburl = {https://dblp.org/rec/journals/siamcomp/SundarT94.bib},
	doi = {10.1137/S0097539790189733},
	journal = {{SIAM} J. Comput.},
	number = {1},
	pages = {24--44},
	timestamp = {Sat, 27 May 2017 14:22:58 +0200},
	title = {Unique Binary-Search-Tree Representations and Equality Testing of Sets and Sequences},
	url = {https://doi.org/10.1137/S0097539790189733},
	volume = {23},
	year = {1994},
	bdsk-url-1 = {https://doi.org/10.1137/S0097539790189733}}

@article{CyganDLMNOPSW16,
	author = {Marek Cygan and Holger Dell and Daniel Lokshtanov and D{\'{a}}niel Marx and Jesper Nederlof and Yoshio Okamoto and Ramamohan Paturi and Saket Saurabh and Magnus Wahlstr{\"{o}}m},
	bibsource = {dblp computer science bibliography, https://dblp.org},
	biburl = {https://dblp.org/rec/journals/talg/CyganDLMNOPSW16.bib},
	doi = {10.1145/2925416},
	journal = {{ACM} Trans. Algorithms},
	number = {3},
	pages = {41:1--41:24},
	timestamp = {Sat, 19 Oct 2019 19:12:05 +0200},
	title = {On Problems as Hard as {CNF-SAT}},
	url = {https://doi.org/10.1145/2925416},
	volume = {12},
	year = {2016},
	bdsk-url-1 = {https://doi.org/10.1145/2925416}}

@article{CyganMWW19,
	author = {Marek Cygan and Marcin Mucha and Karol Wegrzycki and Michal Wlodarczyk},
	bibsource = {dblp computer science bibliography, https://dblp.org},
	biburl = {https://dblp.org/rec/journals/talg/CyganMWW19.bib},
	doi = {10.1145/3293465},
	journal = {{ACM} Trans. Algorithms},
	number = {1},
	pages = {14:1--14:25},
	timestamp = {Tue, 16 Aug 2022 23:09:48 +0200},
	title = {On Problems Equivalent to (min, +)-Convolution},
	url = {https://doi.org/10.1145/3293465},
	volume = {15},
	year = {2019},
	bdsk-url-1 = {https://doi.org/10.1145/3293465}}

@article{KoiliarisX19,
	author = {Konstantinos Koiliaris and Chao Xu},
	bibsource = {dblp computer science bibliography, https://dblp.org},
	biburl = {https://dblp.org/rec/journals/talg/KoiliarisX19.bib},
	doi = {10.1145/3329863},
	journal = {{ACM} Trans. Algorithms},
	number = {3},
	pages = {40:1--40:20},
	timestamp = {Sat, 08 Jan 2022 02:22:07 +0100},
	title = {Faster Pseudopolynomial Time Algorithms for Subset Sum},
	url = {https://doi.org/10.1145/3329863},
	volume = {15},
	year = {2019},
	bdsk-url-1 = {https://doi.org/10.1145/3329863}}

@article{EisenbrandW20,
	author = {Friedrich Eisenbrand and Robert Weismantel},
	bibsource = {dblp computer science bibliography, https://dblp.org},
	biburl = {https://dblp.org/rec/journals/talg/EisenbrandW20.bib},
	doi = {10.1145/3340322},
	journal = {{ACM} Trans. Algorithms},
	number = {1},
	pages = {5:1--5:14},
	timestamp = {Fri, 24 Jan 2020 10:12:20 +0100},
	title = {Proximity Results and Faster Algorithms for Integer Programming Using the {S}teinitz Lemma},
	url = {https://doi.org/10.1145/3340322},
	volume = {16},
	year = {2020},
	bdsk-url-1 = {https://doi.org/10.1145/3340322}}

@article{MehlhornSU97,
	author = {Kurt Mehlhorn and R. Sundar and Christian Uhrig},
	bibsource = {dblp computer science bibliography, https://dblp.org},
	biburl = {https://dblp.org/rec/journals/algorithmica/MehlhornSU97.bib},
	doi = {10.1007/BF02522825},
	journal = {Algorithmica},
	number = {2},
	pages = {183--198},
	timestamp = {Wed, 17 May 2017 14:25:12 +0200},
	title = {Maintaining Dynamic Sequences under Equality Tests in Polylogarithmic Time},
	url = {https://doi.org/10.1007/BF02522825},
	volume = {17},
	year = {1997},
	bdsk-url-1 = {https://doi.org/10.1007/BF02522825}}

@article{BringmannFHSW22,
	author = {Karl Bringmann and Nick Fischer and Danny Hermelin and Dvir Shabtay and Philip Wellnitz},
	bibsource = {dblp computer science bibliography, https://dblp.org},
	biburl = {https://dblp.org/rec/journals/algorithmica/BringmannFHSW22.bib},
	doi = {10.1007/S00453-022-00928-W},
	journal = {Algorithmica},
	number = {5},
	pages = {1341--1356},
	timestamp = {Tue, 03 May 2022 08:37:01 +0200},
	title = {Faster Minimization of Tardy Processing Time on a Single Machine},
	url = {https://doi.org/10.1007/s00453-022-00928-w},
	volume = {84},
	year = {2022},
	bdsk-url-1 = {https://doi.org/10.1007/s00453-022-00928-w}}

@article{ChanH22,
	author = {Timothy M. Chan and Qizheng He},
	bibsource = {dblp computer science bibliography, https://dblp.org},
	biburl = {https://dblp.org/rec/journals/jcss/ChanH22.bib},
	doi = {10.1016/J.JCSS.2021.09.005},
	journal = {J. Comput. Syst. Sci.},
	pages = {159--169},
	timestamp = {Sat, 25 Dec 2021 15:51:10 +0100},
	title = {More on change-making and related problems},
	url = {https://doi.org/10.1016/j.jcss.2021.09.005},
	volume = {124},
	year = {2022},
	bdsk-url-1 = {https://doi.org/10.1016/j.jcss.2021.09.005}}

@article{JansenR23,
	author = {Klaus Jansen and Lars Rohwedder},
	bibsource = {dblp computer science bibliography, https://dblp.org},
	biburl = {https://dblp.org/rec/journals/mor/JansenR23.bib},
	doi = {10.1287/MOOR.2022.1308},
	journal = {Math. Oper. Res.},
	number = {3},
	pages = {1481--1495},
	timestamp = {Mon, 28 Aug 2023 21:30:13 +0200},
	title = {On Integer Programming, Discrepancy, and Convolution},
	url = {https://doi.org/10.1287/moor.2022.1308},
	volume = {48},
	year = {2023},
	bdsk-url-1 = {https://doi.org/10.1287/moor.2022.1308}}

@article{Tamir09,
	author = {Arie Tamir},
	bibsource = {dblp computer science bibliography, https://dblp.org},
	biburl = {https://dblp.org/rec/journals/orl/Tamir09.bib},
	doi = {10.1016/J.ORL.2009.05.003},
	journal = {Oper. Res. Lett.},
	number = {5},
	pages = {303--306},
	timestamp = {Sat, 27 May 2017 14:23:41 +0200},
	title = {New pseudopolynomial complexity bounds for the bounded and other integer Knapsack related problems},
	url = {https://doi.org/10.1016/j.orl.2009.05.003},
	volume = {37},
	year = {2009},
	bdsk-url-1 = {https://doi.org/10.1016/j.orl.2009.05.003}}

@article{AntonopoulosPPV23,
	author = {Antonis Antonopoulos and Aris Pagourtzis and Stavros Petsalakis and Manolis Vasilakis},
	bibsource = {dblp computer science bibliography, https://dblp.org},
	biburl = {https://dblp.org/rec/journals/jco/AntonopoulosPPV23.bib},
	doi = {10.1007/S10878-022-00928-0},
	journal = {J. Comb. Optim.},
	number = {1},
	pages = {24},
	timestamp = {Thu, 05 Jan 2023 17:09:13 +0100},
	title = {Faster algorithms for k-subset sum and variations},
	url = {https://doi.org/10.1007/s10878-022-00928-0},
	volume = {45},
	year = {2023},
	bdsk-url-1 = {https://doi.org/10.1007/s10878-022-00928-0}}

@article{GrahamLLK79,
	author = {Ronald L. Graham and Eugene L. Lawler and Jan K. Lenstra and Alexander H. G. {Rinnooy Kan}},
	doi = {10.1016/S0167-5060(08)70356-X},
	issn = {0167-5060},
	journal = {Annals of Discrete Mathematics},
	pages = {287--326},
	publisher = {Elsevier},
	title = {Optimization and Approximation in Deterministic Sequencing and Scheduling: {A} Survey},
	volume = {5},
	year = {1979},
	bdsk-url-1 = {https://doi.org/10.1016/S0167-5060(08)70356-X}}

@article{LawlerM69,
	author = {Eugene L. Lawler and J. Michael Moore},
	doi = {10.1287/mnsc.16.1.77},
	journal = {Management Science},
	number = {1},
	pages = {77--84},
	title = {A Functional Equation and its Application to Resource Allocation and Sequencing Problems},
	url = {https://doi.org/10.1287/mnsc.16.1.77},
	volume = {16},
	year = {1969},
	bdsk-url-1 = {https://doi.org/10.1287/mnsc.16.1.77}}

@article{HermelinMS22,
	author = {Danny Hermelin and Hendrik Molter and Dvir Shabtay},
	bibsource = {dblp computer science bibliography, https://dblp.org},
	biburl = {https://dblp.org/rec/journals/informs/HermelinMS24.bib},
	doi = {10.1287/IJOC.2022.0307},
	journal = {{INFORMS} J. Comput.},
	number = {3},
	pages = {836--848},
	timestamp = {Fri, 19 Jul 2024 23:17:58 +0200},
	title = {Minimizing the Weighted Number of Tardy Jobs via (max,+)-Convolutions},
	url = {https://doi.org/10.1287/ijoc.2022.0307},
	volume = {36},
	year = {2024},
	bdsk-url-1 = {https://doi.org/10.1287/ijoc.2022.0307}}

@inproceedings{Bringmann23,
	author = {Karl Bringmann},
	bibsource = {dblp computer science bibliography, https://dblp.org},
	biburl = {https://dblp.org/rec/conf/stoc/Bringmann24.bib},
	booktitle = {Proceedings of the 56th Annual {ACM} Symposium on Theory of Computing, {STOC} 2024, Vancouver, BC, Canada, June 24-28, 2024},
	doi = {10.1145/3618260.3649719},
	editor = {Bojan Mohar and Igor Shinkar and Ryan O'Donnell},
	pages = {259--270},
	publisher = {{ACM}},
	timestamp = {Sun, 19 Jan 2025 13:28:32 +0100},
	title = {Knapsack with Small Items in Near-Quadratic Time},
	url = {https://doi.org/10.1145/3618260.3649719},
	year = {2024},
	bdsk-url-1 = {https://doi.org/10.1145/3618260.3649719}}

@inproceedings{Jin23,
	author = {Ce Jin},
	bibsource = {dblp computer science bibliography, https://dblp.org},
	biburl = {https://dblp.org/rec/conf/stoc/Jin24.bib},
	booktitle = {Proceedings of the 56th Annual {ACM} Symposium on Theory of Computing, {STOC} 2024, Vancouver, BC, Canada, June 24-28, 2024},
	doi = {10.1145/3618260.3649618},
	editor = {Bojan Mohar and Igor Shinkar and Ryan O'Donnell},
	pages = {271--282},
	publisher = {{ACM}},
	timestamp = {Wed, 19 Jun 2024 16:49:24 +0200},
	title = {0-1 Knapsack in Nearly Quadratic Time},
	url = {https://doi.org/10.1145/3618260.3649618},
	year = {2024},
	bdsk-url-1 = {https://doi.org/10.1145/3618260.3649618}}

@article{AbboudBHS19,
	author = {Amir Abboud and Karl Bringmann and Danny Hermelin and Dvir Shabtay},
	bibsource = {dblp computer science bibliography, https://dblp.org},
	biburl = {https://dblp.org/rec/journals/talg/AbboudBHS22.bib},
	doi = {10.1145/3450524},
	journal = {{ACM} Trans. Algorithms},
	number = {1},
	pages = {6:1--6:22},
	timestamp = {Thu, 24 Feb 2022 09:26:04 +0100},
	title = {SETH-based Lower Bounds for Subset Sum and Bicriteria Path},
	url = {https://doi.org/10.1145/3450524},
	volume = {18},
	year = {2022},
	bdsk-url-1 = {https://doi.org/10.1145/3450524}}

\end{document}